\journal{Science of Computer Programming}
\newcommand{\goes}[1]{\xrightarrow[]{#1}}
\newtheorem{thm}{Theorem}
\newtheorem{cor}{Corollary}
\newtheorem{lem}{Lemma}
\newtheorem{prop}{Proposition}
\newtheorem{defn}{Definition}
\newtheorem{example}{\ \\Example}
\definecolor{lGray}{rgb}{0.9,0.9,0.9}
\def\name#1{\mbox{\sc #1}}
\def\sos#1#2{{\def\arraystretch{1.6}\begin{array}{c}#1\\\hline
#2\end{array}}}
\begin{document}

\begin{frontmatter}



\title{Adaptability Checking in Multi-Level Complex Systems}

\author{Emanuela Merelli}
\ead{emanuela.merelli@unicam.it}
\author{Nicola Paoletti}
\ead{nicola.paoletti@unicam.it}
\author{Luca Tesei\corref{cor1}}
\cortext[cor1]{Corresponding author}
\ead{luca.tesei@unicam.it}
\address{School of Science and Technology, Computer Science Division, University of Camerino, Via del Bastione 1, 62032, Camerino, Italy}

\begin{abstract}
A hierarchical model for multi-level adaptive systems is built on two basic levels: a lower behavioural level $B$ accounting for the actual behaviour of the system and an upper structural level $S$ describing the adaptation dynamics of the system. The behavioural level is modelled as a state machine and the structural level as a higher-order system whose states have associated logical formulas (constraints) over observables of the behavioural level. $S$ is used to capture the global and stable features of $B$, by a defining set of allowed behaviours. The adaptation semantics is such that the upper $S$ level imposes constraints on the lower $B$ level, which has to adapt whenever it no longer can satisfy them. In this context, we introduce weak and strong adaptability, i.e.\ the ability of a system to adapt for some evolution paths or for all possible evolutions, respectively. We provide a relational characterisation for these two notions and we show that adaptability checking, i.e.\ deciding if a system is weak or strong adaptable, can be reduced to a CTL model checking problem. We apply the model and the theoretical results to the case study of motion control of autonomous transport vehicles.
\end{abstract}

\begin{keyword}
multi-level \sep self-adaptive \sep state machine \sep adaptability relations \sep adaptability checking
\MSC 68Q10 \sep 68Q60
\end{keyword}

\end{frontmatter}



\section{Introduction}
\label{sect:intro}
Self-adaptive systems are particular systems able to modify their own behaviour according to their environment and their current configuration. They learn and develop new strategies in order to fulfil an objective, to better respond to problems, or, more generally, to maintain desired conditions. 

From a broad viewpoint, self-adaptiveness is an intrinsic property of complex natural systems. Self-adaptation is a process driving both the evolution and the development of living organisms that adapt their features and change their phenotype in order to survive to the current habitat, to achieve higher levels of fitness and to appropriately react to external stimuli.

Nowadays, software systems are increasingly resembling complex systems, which motivates the development of methods for enabling software self-adaptiveness. Similarly to natural systems, \textit{``Self-adaptive software evaluates its own behaviour and changes behaviour when the evaluation indicates that it is not accomplishing what the software is intended to do, or when better functionality or performance is possible.''}~\cite{laddaga}. Self-adaptive software finds application in fields like autonomic computing, service-oriented architectures, pervasive service ecosystems, mobile networks, multi-agent systems, and ultra-large-scale (ULS) software systems~\cite{feiler2006ultra}, characterised by distributed, autonomous, interacting, heterogeneous, conflicting and evolvable sub-systems.

%
%


\subsection{Contributions}
In this work we develop a formal hierarchical model for multi-level self-adaptive systems, where two fundamental levels are defined: the \textit{lower behavioural level} $B$, which describes the admissible dynamics of the system; and the \textit{upper structural level} $S$, accounting for the global and stable features of the system that regulates the lower behaviour. More precisely, the $B$ level is modelled as a state machine and the $S$ level is also modelled as a state machine, but such that each state is associated with a set of \textit{constraints} (logical formulas) over observable variables of the lower level. 

A state in the structural level (also called $S$ state) represents a relatively persistent situation, a steady region of the $B$ level, identified by the set of $B$ states satisfying the constraints. Therefore the $S$ level underlies a \textit{higher order structure}, because $S$ states can be interpreted as sets of $B$ states and, consequently, $S$ transitions relate sets of $B$ states. 

In the remainder of the paper our model will be referred to as $S[B]$, in order to highlight the two basic levels that compose the system. This model is broadly inspired by a spatial bio-inspired process algebra called \textit{Shape Calculus}~\cite{shape2,shape1}, where processes are characterised by a reactive behaviour $B$ and by a shape $S$ that imposes a set of geometrical constraints on their interactions and occupancy in the three-dimensional Euclidean space. Here, instead, this paradigm is shifted in a more general context, where $S$ and $B$ are entangled by a hierarchical relation defined on the structural constraints of the $S$ level and the state space of the $B$ level. 

In the following, a brief description of the adaptation semantics is given. Let $\bar{q}$ be the current $B$ state and $\bar{r}$ be the current $S$ state of an $S[B]$ system. Adaptation is triggered whenever $\bar{q}$ cannot evolve into a new state satisfying the current constraints specified by $\bar{r}$. At this point, the $B$ level attempts to adapt towards a target $S$ state $r'$, reachable by performing a transition from $\bar{r}$. During adaptation $B$ is no more constrained by the $S$ level, apart from an invariant condition (possibly empty) guaranteeing that some sanity conditions are met during this phase. Such an invariant is defined locally to the adaptation phase from $r$ to the target $r'$. Adaptation terminates successfully when the $B$ level ends up in a state $q'$ that fulfils the constraints of (one of) the target(s) $r'$.

After the definition of a model for adaptive systems and of a particular adaptation semantics over it, we focus on the \textit{adaptability checking problem}, i.e.\ checking if the system is able to adapt successfully for:
\begin{itemize}
\item some adaptation paths (\textit{weak adaptability checking}); or
\item all possible adaptation paths (\textit{strong adaptability checking}).
\end{itemize}

To this purpose, we set up a formal framework (see Sect.~\ref{sect:adaptability}), based on the definition of weak and strong adaptation as relations over the set of $B$ states and the set of $S$ states. In this way, adaptability is verified on an $S[B]$ system if an appropriate adaptation relation can be built over the states of $B$ and the states of $S$. We formulate the notions of weak and strong adaptability also in a logical form, as Computation Tree Logic (CTL) formulas over the given semantics of an $S[B]$ system. Then, by proving that the logical characterization is equivalent to the relational one, we demonstrate that the adaptability checking problem can be reduced to a classical model checking problem.

A first general introduction of the $S[B]$ model was given in~\cite{merelli2012} by the same authors. In this work, we provide several novelties and improvements, most of them devoted to the adaptability checking problem. Here we show the effectiveness of $S[B]$ systems on a case study in the context of adaptive software systems, a motion controller model for autonomous transport vehicles (ATVs). In~\ref{appendix:bone}, our approach is validated also with a model of bone remodelling, a biological system that is intrinsically self-adaptive, thus showing how $S[B]$ systems are potentially suitable to analyse a broad class of adaptive systems. In addition, we update the operational semantics of $S[B]$ systems and we prove several properties of the resulting labelled transition systems and of the associated weak and strong adaptability relations. Regarding the adaptability checking problem, we formally prove the equivalence between the relational and the logical formulation of strong and weak adaptability (Theorems~\ref{teo:weak} and~\ref{teo:strong}). Finally, we discuss the computational complexity of the adaptability checking problem.


\subsection{Adaptation features of $S[B]$ systems}
\label{sec:introadapt}
Let us characterise our approach according to the ``taxonomy of self-adaptation'', a quite general software-oriented classification proposed by Salehie and Tahvildari~\cite{salehie2009self}. Specifically, the features of the taxonomy considered here are \textit{adaptation type}, or \textit{how} adaptation is realized; temporal issues, related to \textit{when} the system needs to change and to be monitored to achieve adaptation; and \textit{interaction} aspects.

\noindent \textbf{Interaction.} In the $S[B]$ model communication and interactions of the adaptive system with other systems are not explicitly taken into account. This is because the main purpose of the current work is studying the adaptation capabilities of a fundamental model of computation, to which more powerful and expressive models can typically be reduced. Indeed, we always consider the behavioural level $B$ as the transitional semantics of a system constructed from several interacting components.

\noindent \textbf{Temporal characteristics.} Recalling the introductory description of the adaptation semantics, \textit{adaptation starts as late as possible}, only when no other evolution is possible that fulfils the current constraints; and \textit{adaptation ends as soon as possible}, i.e.\ as soon as a target state can be reached. This implies that $S[B]$ systems support a basic type of \textit{proactive} (i.e.\ anticipatory) adaptation, which ensures that the system reaches a state where structural constraints or adaptation invariants are not met if and only if no other evolution is possible.

The choice to exclude adaptations starting from states that can progress normally, i.e.\ without violating the constraints, is motivated by the same definition of adaptation: a mutation in an individual that leads to a \textit{higher} level of fitness. Indeed, as stated in~\cite{bouchachia2012introduction}, an adaptive system \textit{``\ldots seeks to configure its structure with the overall aim of adaptation to the environment trying to optimize its function (i.e., to maximize its fit) to meet its reason of existence''}. 
Our model provides just a qualitative characterization of the fitness of a $B$ state $\bar{q}$ in a $S$ state $\bar{r}$, given by the satisfaction value (true or false) of the constraints. Therefore, no adaptations can start from the state $\bar{q}$ if it can satisfy the constraints in $\bar{r}$ during its evolution, since this configuration corresponds to the highest possible fitness, and any adaptation would produce \textit{equal} (in case of successful adaptation) or \textit{lower} (unsuccessful adaptation) fitness values.

\noindent \textbf{Adaptation type.} This feature concerns aspects related to the implementation of adaptation mechanisms. According to the taxonomy above, our approach falls into the definitions of \textit{model-based adaptation}, i.e.\ based on a model of the system and of the environment; and of \textit{close adaptation}, in the sense that the system has only a fixed number of adaptive actions, due to the fact that we focus on models with finite and fixed state space. On the contrary, \textit{open-adaptive} systems support the runtime addition of adaptation actions. Salehie and Tahvildari also distinguish between \textit{external} and \textit{internal} adaptation. $S[B]$ systems belong to the former type, where adaptation logic ($S$ level) and application logic ($B$ level) are separated. In internal adaptation, conversely, adaptation mechanisms are mixed at the application level.

Taking a broader view that generalizes from software systems, Sagasti~\cite{sagasti1970conceptual} distinguishes between two different adaptive behaviours: the system adapts by modifying itself (\textit{Darwinian adaptation}); or it adapts by modifying its environment (\textit{Singerian adaptation}). In this work, we clearly focus on the former type of adaptation. 
Following this line, the adaptation type of $S[B]$ systems can be further classified as a \textit{top-down} and \textit{behavioural} adaptation. Top-down, because the $S$ level imposes high-level functions (e.g.\ constraints, rules and policies) on the lower $B$ level, which adapts itself whenever it cannot fulfil the current constraints. Bottom-up adaptation represents the opposite direction, occurring for instance when new higher-level patterns emerge from the lower level. On the other hand, behavioural adaptation is related to functional changes, like changing the program code or following different trajectories in the state space. In literature, it is generally opposed to structural adaptation, which is related to architectural reconfiguration, e.g.\ addition, migration and removal of components. Note that structural and behavioural adaptation must not be confused with the structural and the behavioural level of an $S[B]$ system.

\subsubsection*{Structure of the paper}
The paper is organized as follows. Section~\ref{sect:model} introduces the
formalism and the syntax of the $S[B]$ model. Section~\ref{subsect:exmpl_atv} illustrates the application of the model to the example of adaptive motion control. In Section~\ref{sect:semantics} we give
the operational semantics of an $S[B]$ system by means of a flattened transition
system. In Section~\ref{sect:adaptability} we formalise the relations of weak and
strong adaptation, which we equivalently characterise as CTL formulas in Section~\ref{sec:logic}. Related works and conclusions are given in 
Section~\ref{sect:conclusion}. Finally, proofs are presented in~\ref{appendix:proofs} and the $S[B]$ model of bone remodelling in~\ref{appendix:bone}.
\section{A Formal Hierarchical Model for Adaptive Systems}
\label{sect:model}
In our model, a system encapsulates both the behavioural ($B$) and the
structural/adaptive ($S$) level. The behavioural level is classically described as a finite state machine of the form $B=(Q, q_0, \goes{}_B)$ ($Q$ set of $B$ states, $r_0$ initial $B$ state and $\goes{}_B$ transition relation). The structural level is modelled as a finite state machine $S = (R, r_0, \mathcal{O}, \goes{}_S, L)$  where $R$ is a set of $S$ states, $r_0$ is the initial $S$ state, 
$\mathcal{O}$ is an observation function, $\goes{}_S$ is a
transition relation and $L$ is a state labelling function. The function $L$ labels each $S$ state with a formula representing a set of constraints over an \emph{observation} of the $B$ states. 
Therefore an $S$ state $r$ can be directly mapped to the set of
$B$ states satisfying $L(r)$. Through this hierarchy,  $S$ can be viewed as a {\em second-order} structure $(R \subseteq 2^Q, r_0, \goes{}_S \subseteq 2^Q \times 2^Q, L)$ where each $S$ state $r$ is identified with its corresponding set of $B$ states.

Behavioural adaptation is achieved by switching from an $S$ state
imposing a set of constraints to another $S$ state where a (possibly) different
set of constraints holds. During adaptation the behavioural level is no more
regulated by the structural level, except for a condition, called
\textit{transition invariant}, that must be fulfilled by the system undergoing
adaptation. We can think of this condition as a minimum requirement to which the
system must comply when it is adapting and, thus, it is not
constrained by any $S$ state. Note that specifying $true$ as transition invariant one can allow the system to adapt by followinSg any trajectory that is present at the $B$ level. 


\subsection{Information processing between S and B}
\label{sec:information}

In general, we assume no reciprocal internal knowledge between the $S$  and the
$B$ level. In other words, they see each other as \textit{black-box systems}.
However, in order to realise our notion of adaptiveness, there must be some information processing both from $B$ to $S$ and from $S$ to $B$. In particular,
the information from the $B$ level processed by the $S$ level is modelled here as a set of variables $A = \{a_1, \ldots,
a_n\}$ called \emph{observables} of the $S$ level on the $B$ level. 
The values of these variables must always be \emph{derivable} from the
information contained in the $B$ states, which can possibly hold more ``hidden''
information related to unknown interactions and internal activity.
This keeps our approach
black-box-oriented because the $S$ level has not the full knowledge of the
$B$ level, but only some derived (e.g.\ aggregated, selected or calculated)
information.

The adaptation model of an $S[B]$ system could be viewed as a closed-loop system, illustrated in Fig.~\ref{fig:control-loop}, where, in control-theoretic terms, the $B$ level would represent the plant, and the $S$ level the controller.

Let $\bar{q}$ and $\bar{r}$ denote the current state of $B$ and $S$, respectively. $B$ outputs the vector $\mathbf{x}=Post(\bar{q})$~\footnote{With abuse of notation, we allow the $Post$ operator to return an indexed vector of states instead of a set.} of the states reachable from $\bar{q}$ with a single transition. An element $x_i$ of $\mathbf{x}$ would be such that $\bar{q} \goes{} x_i$ and of course we exclude replicated states: $\bigwedge_{i\neq j} x_i \neq x_j$.
Since we assume that $S$ cannot directly access to $B$ states but only to the values of the variables observed at those states, an observer feeds $S$ with the observations $\mathbf{o} = \mathcal{O}(x_i)$ at each next state.

According to the operation mode $m$ (steady or adapting), to the inputs from $B$, to its current state $\bar{r}$ and to the (possibly null) target $S$ state $r$, $S$ updates the current operation mode, its current and target state and computes a vector $\mathbf{v}$ for selecting the allowed next states of $B$. In particular, an element $v_i$ of $\mathbf{v}$ is true iff, under the observation $\mathbf{o}$, either the invariant of some adaptation paths or the current constraints in $\bar{r}$ are satisfied, depending on whether the system is adapting or not.

The $S$ level closes the feedback loop by outputting $\mathbf{v}$ to $B$, which in turn can update its current state $\bar{q}$ by selecting one of the allowed states, i.e.\ those next states $x_i$ under which the required constraints are met ($v_i$ true). Such set can be expressed in function of $\mathbf{v}$ and $\mathbf{x}$ as $f(\mathbf{v},\mathbf{x})=\{x_i \in \mathbf{x} \ | \ v_i = \top\}$.


\tikzstyle{block} = [draw, rectangle, very thick,draw=black!50,
    top color=white,bottom color=black!20,
    minimum height=5em, rounded corners=3mm]
\begin{figure}
\centering
\begin{tikzpicture}[auto,>=latex',font=\footnotesize]
   \node [block] (bsystem) {$\begin{array}{c} B\text{ level}\\ \\ { \bar{q}\in f(\mathbf{v},\mathbf{x})} \\ \mathbf{x}\leftarrow Post(\bar{q})\\ \end{array}$};
      
   \node [block, below of=bsystem, node distance=2.5cm] (ssystem){$\begin{array}{c} S\text{ level}\\ \\
(m,\bar{r}, r, \mathbf{v})\leftarrow g(m,\bar{r}, r, \mathbf{o})
    \end{array}$};
       \node [block, name=observer, right of=ssystem, minimum height=2em,node distance=3.5cm] {$\mathcal{O}$};
          \node [coordinate,name=inputb, left of=ssystem,node distance=3.5cm] {};
          
	\draw [->,rounded corners=2mm,very thick,color=black!50] (bsystem) -| node [color=black]{$\mathbf{x}$} (observer);
	 \draw [->,rounded corners=2mm,very thick,color=black!50] (observer) -- node [color=black]{$\mathbf{o}$} (ssystem);
	\draw [-,very thick,draw=black!50] (ssystem)-- (inputb);
	\draw [->,rounded corners=2mm,very thick,draw=black!50] (inputb) |- node {$\mathbf{v}$} (bsystem);

\end{tikzpicture}
\caption{Adaptation loop in an $S[B]$ system. At each step, the $S$ level observes the next states $\mathbf{x}$ of the $B$ level and closes the loop by outputting to $B$ a vector $\mathbf{v}$ which describes the allowed next states, e.g.\ those that satisfy the current constraints if the system is not adapting, or those that satisfy the current adaptation invariant.}
\label{fig:control-loop}
\end{figure}
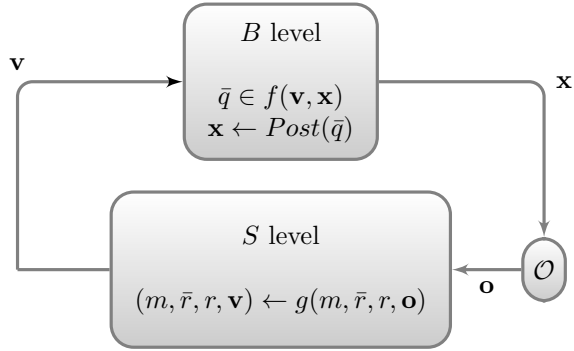

%
\subsection{Language for constraints}

In our model, the constraints characterising the states of an $S$ level are expressed using formulas of a many-sorted first order logic. More precisely, the definition of an $S$ level includes the definition of a many-sorted signature $\Sigma$ containing some function symbols, some predicate symbols and some sorts $D_{1}, \ldots, D_{k}$. $\Sigma$-terms and $\Sigma$-formulas are constructed in the standard way \cite{feferman1974applications}. 
In addition, a particular set of sorted variables, which we call observables, must be fixed in the $S$ level. Such a set is of the form $A = \{ a_{1} \colon D_{j_{1}}, \ldots, a_{n} \colon D_{j_{n}}\}$, where $j_{i} \in \{1, \ldots, k\}$ for all $i=1, \ldots, n$. Then, constraints can be expressed as $\Sigma$-formulas $\psi$ such that the variables that occur free in $\psi$, denoted by $\mathrm{free}(\psi)$, are a (possibly empty) subset of $A$. This set will be denoted by $\Psi(\Sigma, A) = \{\psi \mid \psi \mbox{ is a $\Sigma$-formula} \wedge \mathrm{free}(\psi) \subseteq A \}$.
 
We also impose that in the $S$ level a particular structure $M$ is fixed for the evaluation of $\Sigma$-formulas, i.e.\ $k$ non-empty domains $M(D_{1}), \ldots, M(D_{k})$, as carrier sets for sorts, together with interpretations for all function and predicate symbols of $\Sigma$. To obtain the full semantic evaluation of formulas in $\Psi(\Sigma, A)$ we will take values for the free variables in $A$ from an observation function. 

\begin{defn}[Observation Function]
Let $\mathcal{Q}$ be the universe set of all states of machines possibly representing $B$ levels. Let $\Sigma$ be a many-sorted signature, let 
$A=\{ a_{1} \colon D_{j_{1}}, \ldots, a_{n} \colon D_{j_{n}}\}$ 
be a set of observables and let $M$ be a structure for the evaluation of $\Sigma$-formulas. 
An \emph{observation function} $\mathcal{O}^{\Sigma,A}_{M}$ on $\Sigma$, $A$ and $M$ is a partial function 
$$
\mathcal{O}^{\Sigma,A}_{M} \colon \mathcal{Q} \hookrightarrow (A \rightarrow \mathcal{D})
$$
where $(i)$ $\mathcal{D} = \bigcup_{i=1}^{n} M(D_{j_{i}})$ 
and 
$(ii)$ for any state $q \in \mathcal{Q}$, if $\mathcal{O}^{\Sigma,A}_{M}(q) \neq \perp$ then $\mathcal{O}^{\Sigma,A}_{M}(q)(a_{i} \colon D_{j_{i}}) \in M(D_{j_{i}})$, for all $i = 1, \ldots, n$. For a lighter notation, we will use $\mathcal{O}$ instead of $\mathcal{O}^{\Sigma,A}_{M}$ when $\Sigma, A$ and $M$ are clear from the context.
\end{defn}

Note that the use of the universe of states as domain makes the definition of the observation function independent from a particular state machine representing a behavioural level $B$. Note also that we do not require the observation function to be injective. This means that some different states can give the same values to the observables. In this case, the difference among the states is not visible to $S$ through the observation, but it is internal to $B$. 

To complete the machinery for checking whether a set of constraints is satisfied or not, we define the satisfaction relation in the natural way.

\begin{defn}[Satisfaction relation]
Let $\mathcal{O}^{\Sigma,A}_{M}$ be an observation function. A state $q \in \mathcal{Q}$
\emph{satisfies} a formula $\psi \in \Psi(\Sigma,A)$, written $q \models \psi$,
iff $\mathcal{O}^{\Sigma,A}_{M}(q) \neq \perp$ and $\psi$ is true, according to the standard semantics of many-sorted first order logic, with respect to the structure $M$ and by substituting in $\psi$ every occurrence of the free sorted variable $a_{i} \colon D_{j_{i}}$ with $\mathcal{O}^{\Sigma,A}_{M}(q)(a_{i} \colon D_{j_{i}})$.

We also define an evaluation function $[[ \cdot ]]:
\Psi(\Sigma,A) \goes{} 2^\mathcal{Q}$ mapping a formula
$\psi \in \Psi(\Sigma,A)$ to the set of states $[[ \psi ]] = \lbrace q \in \mathcal{Q} \mid q
\models \psi \rbrace$, i.e.\ those satisfying $\psi$.
\end{defn}

We can now state that what we call a set of constraints is formally expressed by a formula $\psi \in \Psi(\Sigma,A)$ that is the conjunction of all the formulas representing each constraint in the set. The set of constraints is satisfied if and only if the corresponding formula is true in the fixed structure $M$ and observation $\mathcal{O}^{\Sigma,A}_{M}$. 


\begin{example}
Let us consider a set of observables and associated sorts:
$$
A = \{{velocity}:\mathbb{R}, \ {congestion}:\mathbb{B} \}
$$ 
Consider also a signature $\Sigma = \{ \mathbb{R}, \mathbb{B}, ==,>,<,0,5\}$ where $\mathbb{R}$ and $\mathbb{B}$ are the sorts indicating the domains of real numbers and boolean, respectively; $==$ is the equality predicate interpreted as the identity relation in each domain; $>$ and $<$ are the usual greater-than and less-than predicates over $\mathbb{R}$; and the constants $0$ and $5$ are the real numbers $0$ and $5$. A possible formula $\psi$ in the language $\Psi(\Sigma, A)$ is 
$$
congestion \implies velocity < 5 \ \wedge \ \neg congestion \implies velocity > 0
$$
whose satisfaction depends on the particular values of the variables, which will be different in different states.

In the context of Autonomous Transport Vehicles (ATVs), this formula can be thought to represent a set of two constraints, one imposing that ``in case of congestion, the velocity of the vehicle must be lower than $5$'' and the other that ``in normal traffic conditions, the velocity must be greater than $0$''.
\end{example}

\subsection{Coupling $S$ and $B$}

Let us now formally define the behavioural level $B$ and the structural level $S$ separately. Afterwards, the $S[B]$ model is defined as the combination of the two.

\begin{defn}[Behavioural level]
The behavioural level of a system is a tuple $B=(Q, q_0, \goes{}_B)$, where
\begin{itemize}
\item $Q \subseteq \mathcal{Q}$ is a \emph{finite} set of states and $q_0 \in Q$ is the initial state; and
\item $\goes{}_B \subseteq Q \times Q$ is the transition relation.
\end{itemize}
\end{defn}


\begin{defn}[Structural Level]
The structural level of a system is a tuple $S=(R, r_0, \mathcal{O}^{\Sigma,A}_{M}, \goes{}_S, L)$,
where
\begin{itemize}
\item $R$ is a finite set of states and $r_0 \in R$ is the initial state;
\item $\mathcal{O}^{\Sigma,A}_{M}$ is an observation function on a signature $\Sigma$,  a set of observables $A$ and a structure $M$;
\item $\goes{}_S \subseteq R \times \Psi(\Sigma,A) \times R$ is a transition relation,
labelled with a formula called \emph{invariant}; and
\item $L: R \goes{} \Psi(\Sigma,A)$ is a function labelling each state with a formula representing \emph{a set of constraints}.
\end{itemize}
\end{defn}

%

Let us now give an intuition of the adaptation semantics. Let the current
$S$ state be $r_i$ and suppose $r_i \goes{\psi}_S r_j$ for some $r_{j}$. Assume that the behaviour is
in a steady state (i.e.\ not adapting) $q_i$ and therefore $q_i \models L(r_i)$.
If the $B$ state can move, but all $B$ transitions $q_i \goes{}_B q_j$ are such that $q_j \not  \models
L(r_i)$, then the system can start adapting to the target $S$ state $r_j$. In this phase, the
$B$ level is no more constrained, but during adaptation the invariant $\psi$
must be met. Adaptation ends when the behaviour reaches a state $q_k$ such that
$q_k \models L(r_j)$. 

We want to remark that the model supports the non-deterministic choice between adaptations, i.e.\ the system can adapt to every target state $r_{j}$ reachable with a transition $r_i \goes{\psi}_S r_j$ from the current $r_{i}$ state. The non-determinism can be both external - that is different target states can be reached by satisfying different invariants - and internal - that is different target states can be reached satisfying the same invariant condition. 

\begin{defn}[\mbox{$S[B]$ system}]\label{def:sb} \ \\
An \emph{$S[B]$ system} is the combination of a behavioural level $B=(Q, q_0, \goes{}_B)$ and a structural level $S=(R, r_0, \mathcal{O}^{\Sigma,A}_{M}, \goes{}_S, L)$ such that for all $q \in Q$, $\mathcal{O}^{\Sigma,A}_{M}(q) \neq \perp$. Moreover, in any $S[B]$ system the initial $B$ state must satisfy the constraints of the initial $S$ state, i.e.\ $q_0 \models L(r_0)$.
\end{defn}
\section{Case Study: Adaptive Motion Control of Autonomous Transport Vehicles}\label{subsect:exmpl_atv}
In this section, we illustrate the features of our approach by means of an example adapted from~\cite{khakpour2012hpobsam}: a model of motion control of \textit{Autonomous Transport Vehicles (ATVs)} in a smart airport. In this context, ATVs are responsible for the transport of passengers between stopovers like passenger entrances, check-in desks, departure gates, and plane parking positions. In this work, we just consider a subcomponent of such vehicles, that accounts for controlling vehicle speed and for switching from main roads to secondary roads in case of traffic congestion or blockages.

\subsubsection*{Behavioural Level}
The behavioural level model is depicted in Fig.~\ref{fig:b-level_atv} and over it, we consider the following set of observable variables and associated sorts:
\begin{itemize}
\item $r: \{M\text{ (main)}, S\text{ (secondary)}\}$, the road being driven on;
\item $v: \{V_0\text{ (slow)}, V_1\text{ (medium)},V_2\text{ (high)}\}$, the velocity of the vehicle; and 
\item $c: \{0 \text{ (no congestion)},1 \text{ (congestion)} \}$, a variable indicating the congestion of the main road.
\end{itemize}
Hereafter, we will refer to a specific state $s$ by using the notation $s$:$(r_s,v_s,c_s)$, where $r_s$, $v_s$ and $c_s$ are the values of the observables at $s$.

At each state, the motion controller can increase, decrease or keep the current velocity; and it can switch from the main road to the secondary one, or viceversa. In this example, we simulate that a congestion event occurs at the state $2$:$(M,V_2,0)$ through the path $2$:$(M,V_2,0) \goes{}_B 3$:$(M,V_2,0) \goes{}_B 8$:$(M,V_1,1)$. States $2$ and $3$ are not distinguishable by the values of their observables, but intuitively $3$ models a state where a congestion event has been somehow notified to the controller. In a labelled structure (e.g.\ LTS) the path above could have expressed by a single labelled transition like $2$:$(M,V_2,0) \goes{cong} 8$:$(M,V_1,1)$, which would be enabled, for instance, when a ``congestion signal'' is received. Encoding a labelled structure into an unlabelled one requires indeed such an intermediate state, as shown in~\cite{de1990action} in the case of mapping LTSs into Kripke structures. Similarly, the event of traffic returning to normal conditions is simulated at state $10$:$(S,V_0,1)$ through the path $10$:$(S,V_0,1) \goes{}_B 13$:$(S,V_0,1) \goes{}_B 4$:$(S,V_0,0)$. In principle, such events could occur anywhere in the system, but, for the sake of simplicity, they have been implemented just at the above specified states.

\begin{figure}
\centering
\includegraphics[width=\textwidth]{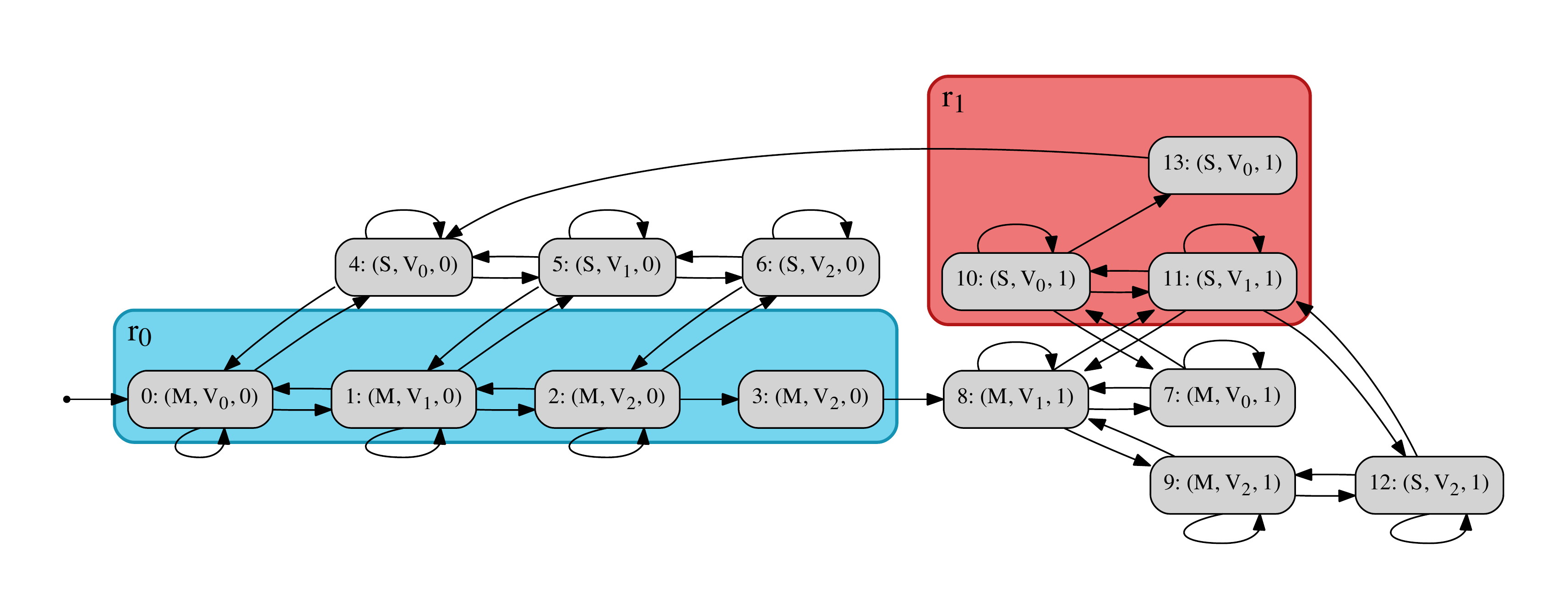}
\caption{The behavioural state machine for the motion control example. Each state is $s$ labelled by different evaluation of the observable variables $s$:$(r_s,v_s,c_s)$, i.e.\ state:(road, velocity, congestion). Coloured areas are used to represent the states of the $S$ level, which identify stable regions in the $B$ level. The $S$ states considered are: $r_0$ (normal mode, blue) and $r_1$ (fallback/congestion mode, red).}
\label{fig:b-level_atv}
\end{figure}

\subsubsection*{Structural Level}
We consider two different structural levels, $S_0$ (Fig.~\ref{fig:s-levels_atv} (a)) and $S_1$ (Fig.~\ref{fig:s-levels_atv} (b)). Ideally, these $S$ levels regulates the different modes of operation of the ATV and consists of the following $S$ states:
\begin{itemize}
\item[$r_0:$] it corresponds to the normal mode of operation, occurring when the main road is driven ($r==M$) and there is no traffic congestion ($c==0$); and
\item[$r_1:$] it models the fallback mode, occurring when congestion occurs ($c==1$); in this case the ATV has to be in the secondary road ($r==S$), which implies that it cannot drive at the maximum velocity ($v==V_0 \vee v==V_1$).
\end{itemize}
Figure~\ref{fig:b-level_atv} shows the sub-behaviours of the $B$ level, as identified by the $S$ states $r_0$ and $r_1$.

$S_0$ describes the adaptation between the normal mode and the fallback mode in case of traffic congestion, and back from the fallback mode to the normal one, when the motion controller is notified that congestion is over.
The structural state machine $S_0$ is given by:
$$S_0 = (\{r_0, r_1\}, r_0, \mathcal{O}^{\Sigma,A}_{M}, \{ r_0 \goes{v==V_0 \vee v==V_1}_S r_1, r_1 \goes{c==0}_S r_0 \}, L ),$$
where $\mathcal{O}^{\Sigma,A}_{M}$ is the above introduced observation function; and $L$ is the labelling function giving the previously explained constraints. Below we discuss in more detail the transitions of $S_0$.
\begin{itemize}
\item $r_0 \goes{v==V_0 \vee v==V_1}_S r_1.$ According to the transition invariant, during the adaptation between the normal and the fallback mode, the ATV must not drive at the maximum speed.
\item $r_1 \goes{c==0}_S r_0.$ In order to adapt from the fallback back to the normal mode, it must always hold that congestion is over.
\end{itemize}

On the other hand, $S_1$ does not constraint the behaviour of the ATV by including the fallback $S$ state $r_1$ when a congestion event happens. The corresponding state machine is
$$S_1 = (\{r_0\}, r_0, \{ r_0 \goes{v==V_0 \vee v==V_1}_S r_0 \}, L ),$$
where the transition $r_0 \goes{v==V_0 \vee v==V_1}_S r_0$ indicates that while adapting back to the normal mode $r_0$, the system has to meet the velocity constraints specified in the adaptation invariant ($v==V_0 \vee v==V_1$).

\begin{figure}
\centering
\subfloat[]{\includegraphics[width=0.65\textwidth]{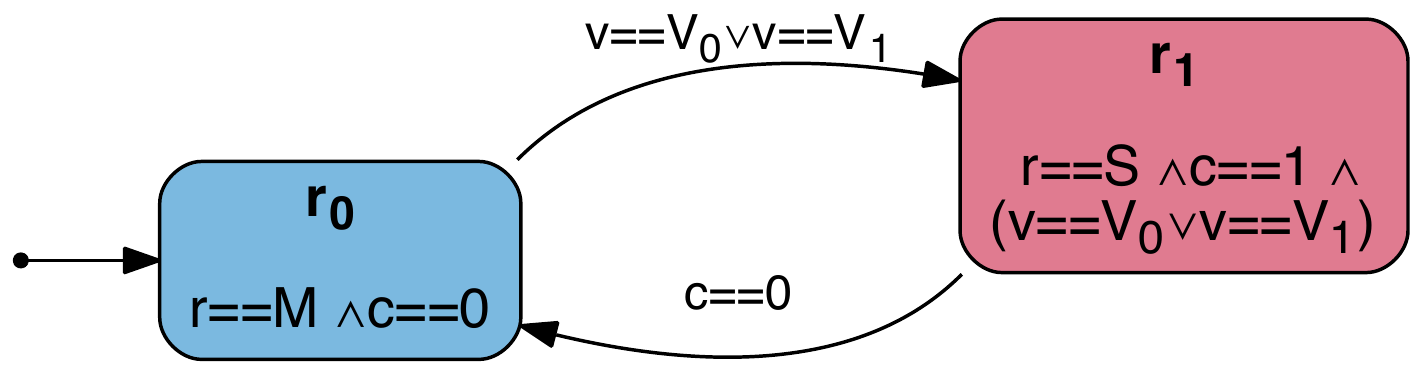}}\hfill
\subfloat[]{\includegraphics[width=0.24\textwidth]{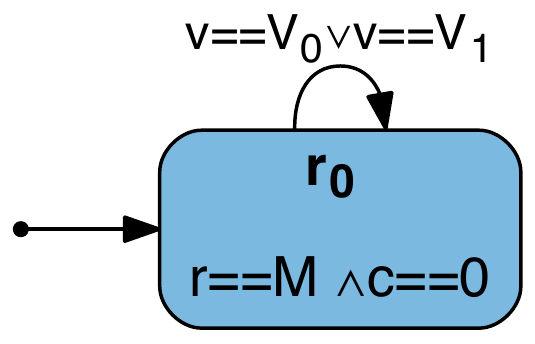}}
\caption{The two different structural levels for the motion controller example. $S_0$ (Fig.~\ref{fig:s-levels_atv} (a)) models the adaptation logic between two operation modes, $r_0$ (normal) and $r_1$ (fallback). Instead $S_1$ just consider adaptations from $r_0$ to itself, without including intermediate $S$ states.}
\label{fig:s-levels_atv}
\end{figure}
\section{Operational Semantics of the Flat $S[B]$ system}
\label{sect:semantics}
In this section we give the operational semantics of an $S[B]$ system as a transition system resulting from the flattening of the behavioural and the structural levels. We obtain a Labelled Transition System (LTS) over states of the form $(q, r, \rho)$, where
\begin{itemize}
\item $q \in Q$ and $r \in R$ are the active $B$ state and $S$ state, respectively; and
\item $\rho$ keeps the target $S$ state that must be reached during adaptation and the invariant that must be fulfilled during this phase. Therefore $\rho$ is either empty (no adaptation is occurring), or a singleton $\lbrace (\psi, r') \rbrace$, with $\psi \in \Psi(\Sigma,A)$ a formula and $r' \in R$ an $S$ state.
\end{itemize}
\begin{defn}[Flat $S\text{[}B\text{]} system$]
Consider an $S[B]$ system. The corresponding flat $S[B]$ system is the LTS
$\mathcal{F}(S[B])=(F,f_0,\goes{r} \cup \goes{r,\psi,r'})$ where
\begin{itemize}
\item $F = Q \times R \times (\{(\psi,r') \mid \exists r \in R . \ r \goes{\psi}_S r' \} \cup \{\emptyset \})$
is the set of states; 
\item $f_0 = (q_0, r_0, \emptyset)$ is the initial state;
\item $\goes{r} \subseteq F \times F$, with $r \in R$, is a family of transition relations between non-adapting states, i.e., both satisfying $L(r)$; 
\item $\goes{r,\psi,r'} \subseteq F \times F$, with $r,r' \in R$ and $\psi \in \Psi(\Sigma,A)$, is a family of transition relations between adapting states, where the adaptation is determined by the $S$ transition $r \goes{\psi}_S r'$; and 
\item the pairs in $\goes{r}$ and in $\goes{r,\psi,r'}$ are all and only those derivable using the rules in Table~\ref{tbl:sos}.
\end{itemize}
\end{defn}

\begin{table}
\begin{small}
\[
\begin{array}{|c|}
\hline
\name{Steady} \quad \sos{q \models L(r) \quad q \goes{}_B q' \quad q' \models L(r)}{(q, r, \emptyset) \goes{r} (q', r, \emptyset)}
\\
\name{AdaptStart} \quad 
\sos
{\forall q''.(q \goes{}_Bq'' \implies q'' \not\models L(r)) \\  
q \models L(r) \quad q \goes{}_B q' \quad r \goes{\psi}_S r'  \quad  q' \not \models L(r') \quad q' \models \psi}
{( q, r, \emptyset) \goes{r,\psi,r'} ( q', r, \lbrace(\psi,r')\rbrace)}
\\
\name{Adapt} \quad 
\sos
{\forall q''.(q \goes{}_Bq'' \implies q'' \not\models L(r'))\\  
q \models \psi \quad  q \not\models L(r') \quad q \goes{}_B q'  \quad q' \models \psi }
{( q, r, \lbrace(\psi,r')\rbrace) \goes{r,\psi,r'} ( q', r, \lbrace(\psi,r')\rbrace)}
\\
\name{AdaptEnd} \quad 
\sos
{q \models \psi \quad  q \not\models L(r') \quad q \goes{}_B q'  \quad q' \models L(r')}
{(q, r, \lbrace(\psi,r')\rbrace) \goes{r,\psi,r'} ( q', r', \emptyset)}
\\
\name{AdaptStartEnd} \quad 
\sos
{\forall q''.(q \goes{}_Bq'' \implies q'' \not\models L(r)) \\  
q \models L(r) \quad q \goes{}_B q' \quad r \goes{\psi}_S r'  \quad  q' \models L(r') \quad}
{( q, r, \emptyset) \goes{r,\psi,r'} ( q', r', \emptyset)}
\\
\hline
\end{array}\]
\end{small}
\caption{Operational semantics of the flat $S[B]$ system}\label{tbl:sos}
\end{table}

Let us discuss the rules listed in Table~\ref{tbl:sos} characterising the flattened transitional semantics:
\begin{itemize}
\item Rule \textsc{Steady} describes the steady (i.e.\ non-adapting) behaviour
of the system. If the system is not adapting and a $B$ state $q$ can perform a
transition to a $q'$ that satisfies the current constraints $L(r)$, then the
flat system can perform a non-adapting transition $\goes{r}$ of the form $(q, r,
\emptyset) \goes{r} (q', r, \emptyset)$.
\item Rule \textsc{AdaptStart} regulates the starting of an adaptation phase.\\Adaptation occurs when all  of the next $B$ states do not satisfy the current $S$ state constraints - i.e.\ $\forall q''.(q \goes{}_Bq'' \implies q'' \not \models L(r)$ - and the $B$ machine is not itself deadlocked ($q \goes{}_{B} q'$). In this case, for each $S$ transition $r \goes{\psi}_S r'$ an adaptation towards the target state $r'$, under the invariant $\psi$, can start. The flat system performs an adapting transition $\goes{r,\psi,r'}$ of the form $(q, r, \emptyset) \goes{r,\psi,r'} (q', r, \lbrace(\psi,r')\rbrace)$.
\item Rule \textsc{Adapt} can be used only during an adaptation phase. It handles the case in which, after the current transition, the system keeps adapting because a steady configuration cannot be reached ($\forall q''.(q \goes{}_Bq'' \implies q'' \not\models L(r'))$). In this situation, since the system still must adapt ($q \not \models L(r')$), if the $B$ machine is not deadlocked and the invariant can still be satisfied ($q\goes{}_Bq'$ and $q' \models \psi $), the rule allows a transition of the form $(q, r, \lbrace(\psi,r')\rbrace)
\goes{r,\psi,r'} (q', r, \lbrace(\psi,r')\rbrace)$. Note that during adaptation the
behaviour is not regulated by the $S$ states constraints. Note also that
the semantics does not assure that a state where the target $S$ state constraints 
hold is eventually reached. Two different formulations of such adaptability requirements are
given in Section~\ref{sect:adaptability}.
\item Also rule \textsc{AdaptEnd} can only be applied during an adaptation phase and it handles the case in which, after the current transition, the adaptation must end because a steady configuration has been reached ($q' \models L(r')$). It allows a transition $\goes{r,\psi,r'}$ from an adapting state $(q, r, \lbrace(\psi,r')\rbrace)$ to the steady (non-adapting) state $(q', r', \emptyset)$.
\item Rule \textsc{AdaptStartEnd} handles the special case in which an adaptation phase must start from a steady situation - $\forall q''.(q \goes{}_Bq'' \implies q'' \not \models L(r)$ - but then, after just one move of the $B$ level, another steady region of the $S$ level is reached ($ q' \models L(r')$). In this case the invariant $\psi$ associated to the $S$ transition is ignored and the system goes directly into another steady state. Note that this rule is alternative to the rule \textsc{AdaptStart} in which the initial situation is the same, but the steady region is not reached after one $B$ transition. The flat system performs an adapting transition $\goes{r,\psi,r'}$ of the form $(q, r, \emptyset) \goes{r,\psi,r'} (q', r,' \emptyset)$.
\end{itemize}

Let us now state some properties of the given flat semantics. In the following, given any transition relation $\rightarrow$ and any state $s$, by $s \rightarrow$ and by $s \not \rightarrow$ we mean, as usual, that there exists a state $s'$ such that $s \rightarrow s'$ and that there exists no state $s'$ such that $s \rightarrow s'$, respectively. Moreover, by  $\rightarrow^{+}$ we indicate a finite, non-empty, sequence of $\rightarrow$ steps; more formally, there exists $n \in \mathbb{N}, n > 0$ such that $s= s_{0} \rightarrow s_{1} \rightarrow \cdots s_{n-1} \rightarrow s_{n}$. Finally, by $\rightarrow^{k}$, $k\geq 0$, we indicate $k$ consecutive steps of the relation $\rightarrow$:  $s= s_{0} \rightarrow s_{1} \rightarrow \cdots s_{k-1} \rightarrow s_{k}$. If $k = 0$, then $s \rightarrow^{0} s'$ is equivalent to say that there is the empty sequence of steps $s$, and thus $s'=s$. This is always possible, even if the relation $\rightarrow$ is not reflexive. 

\begin{prop}[Properties of flat semantics]\label{prop:semantics} \ \\
Let $\mathcal{F}(S[B])=(F,f_0,\goes{r} \cup \goes{r,\psi,r'})$ be a flat $S[B]$ system. All the following statements hold:
\begin{itemize}
\item[(i)] If a steady transition can be performed, then adaptation cannot start:\\
$\forall (q,r, \emptyset) \in F. \; (q,r,\emptyset) \goes{r} (q',r,\emptyset) \implies (q,r,\emptyset) \goes{r, \psi, r'} \! \! \! \! \! \! \not $

\item[(ii)] If adaptation can start, then no steady transition is possible:\\
$\forall (q,r, \emptyset) \in F. \; (q,r,\emptyset) \goes{r, \psi, r'} (q',r,\{(\psi,r')\}) \implies (q,r,\emptyset) \not \goes{r} $

\item[(iii)] During adaptation no steady transition is possible:\\
$\forall (q',r',\{(\psi,r')\}) \in F. \; (q',r',\{(\psi,r')\})  \not \goes{r}$

\item[(iv)] The non-adapting and the adapting transition relations are disjoint:\\$\forall r,r' \in R, \forall \psi \in \Psi(\Sigma,A). \; \goes{r} \cap \goes{r,\psi,r'} = \emptyset$


\item[(v)] In case of a successful adaptation, the adaptation phase ends as soon as possible, i.e.\ as soon as the target steady state can be reached with a single transition.

\item[(vi)] Given any  $q \in Q$ and $r \in R$, then every path $\pi$ in $\mathcal{F}(S[B])$ starting in a state $(q,r,\emptyset)$ is of one of the following two kinds:

\begin{description}
\item[(1)] $\pi$ is finite (possibly empty) or $\pi$ is infinite and it has the form:
$$
\begin{array}{l}
\pi = (q= q_0, r=r_0,\emptyset)
(\goes{r_{0}})^{m_0}(\goes{r_{0},\psi_{0},r_{1}})^{n_0}\cdots \\
\cdots (q_i,r_i,\emptyset)
 (\goes{r_{i}})^{m_i}(\goes{r_{i},\psi_{i},r_{i+1}})^{n_i}
 (q_{i+1},r_{i+1},\emptyset) \cdots \\
\cdots (q_{k-1},r_{k-1},\emptyset) (\goes{r_{k-1}})^{m_{k-1}}(\goes{r_{k-1},\psi_{k-1},r_{k}})^{n_{k-1}}(q_{k}, r_{k}, \emptyset) \cdots
\end{array}
$$
where $k \geq 0$ and for each $i \geq 0$, either $m_i = 1 \wedge n_i = 0$ (steady transition) or $m_i = 0 \wedge n_i > 0$ (adaptation path);
\item[(2)] $\pi$ is finite, non-empty and it has the form:
$$
\begin{array}{l}
\pi = (q= q_0, r=r_0,\emptyset)
(\goes{r_{0}})^{m_0}(\goes{r_{0},\psi_{0},r_{1}})^{n_0}\cdots \\
\cdots (q_i,r_i,\emptyset)
 (\goes{r_{i}})^{m_i}(\goes{r_{i},\psi_{i},r_{i+1}})^{n_i}
 (q_{i+1},r_{i+1},\emptyset) \cdots \\
\cdots (q_k,r_k,\emptyset) (\goes{r_{k},\psi,r'})^{n_k} (q',r_{k}, \{(\psi, r')\}) 
\end{array}
$$
where $k \geq 0$, for each $0\leq i < k$ either $m_i = 1 \wedge n_i = 0$  or $m_i = 0 \wedge n_i > 0$, $n_{k} > 0$ and $(q',r_{k}, \{(\psi, r')\}) \;\;\;\;  \not \!\!\!\!\!\!\!\!\!\! \goes{r_{k}, \psi, r'}$. In this case the path stops during adaptation.
\end{description}

\item[(vii)] Let $\pi \in \mathcal{F}(S[B])$ be a path starting in a state $(q,r,\emptyset)$ such that $q \models L(r)$. Then, in every position $i$ of the path such that $\pi[i] = (q_{i},r_{i},\emptyset)$, it holds $q_{i} \models L(r_{i})$.
\end{itemize}
\end{prop}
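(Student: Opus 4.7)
The plan is to work through the seven clauses of Proposition~\ref{prop:semantics} by inspection of the five SOS rules in Table~\ref{tbl:sos}. Each clause reduces to checking either that particular premises are mutually exclusive, or that the shape of the source/target triple produced by one rule does not match that consumed by another. No deep argument is required; what matters is being systematic about which rule is capable of firing in which configuration, and then stitching these observations together for the path-level statements.

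For clause (i), firing of \textsc{Steady} at $(q,r,\emptyset)$ witnesses a $q'$ with $q \goes{}_B q'$ and $q' \models L(r)$, which directly contradicts the universal premise $\forall q''.(q\goes{}_B q'' \implies q'' \not\models L(r))$ needed by both \textsc{AdaptStart} and \textsc{AdaptStartEnd}; clause (ii) is its contrapositive. Clause (iii) is immediate by shape: any state reachable during adaptation has third component $\{(\psi,r')\}$, whereas \textsc{Steady} only consumes states with empty third component. For clause (iv), transitions produced by \textsc{AdaptStart}, \textsc{Adapt}, and \textsc{AdaptEnd} differ from \textsc{Steady} transitions in the triples themselves (either the source or the target has a non-empty third component), and \textsc{AdaptStartEnd} is the only rule that can produce a triple of shape $(q,r,\emptyset)\to(q',r,\emptyset)$ (taking $r=r'$) but is disabled exactly where \textsc{Steady} is enabled, by the same reasoning used for (i).

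Clause (v) is the heart of the ``as soon as possible'' feature of the semantics. The key observation is that \textsc{Adapt} requires $\forall q''.(q \goes{}_B q'' \implies q'' \not\models L(r'))$, while \textsc{AdaptEnd} is enabled whenever some $B$-successor $q'$ satisfies $L(r')$. Hence if any successor of the current state falls into the target region, \textsc{Adapt} is disabled and only \textsc{AdaptEnd} may fire; no \textsc{Adapt} step can ``skip over'' a termination opportunity. A short formal statement in terms of the family $\goes{r,\psi,r'}$ then makes (v) precise.

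Clauses (vi) and (vii) follow by induction on path length using the rule analysis above. At any steady state $(q_i,r_i,\emptyset)$ the dichotomy (i)/(ii) forces the next transition to be either a \textsc{Steady} step (which preserves the second and third components) or an adapting step from \textsc{AdaptStart}/\textsc{AdaptStartEnd}; once in an adapting state $(q,r,\{(\psi,r')\})$, only \textsc{Adapt} and \textsc{AdaptEnd} apply, and both preserve $r$, $\psi$ and $r'$ until \textsc{AdaptEnd} returns to a steady state with second component $r'$. Partitioning $\pi$ into its maximal steady runs and adaptation bursts then yields exactly the two shapes in (vi). For (vii), induct on the position $i$ of a steady state along $\pi$: the base is the hypothesis $q \models L(r)$, and each of \textsc{Steady}, \textsc{AdaptEnd}, and \textsc{AdaptStartEnd} ensures in its conclusion that the resulting $B$-component satisfies the $L$-label of the resulting $S$-component. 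The main obstacle, and the step I would handle most carefully, is the bookkeeping in (vi): one must check that consecutive adapting steps share the same $\psi$ and target $r'$, so that each adaptation burst of length $n_i$ is well-formed and is followed either by an \textsc{AdaptEnd}/\textsc{AdaptStartEnd} step into $(q_{i+1},r_{i+1},\emptyset)$ or by a stuck state of the form in (vi)(2).
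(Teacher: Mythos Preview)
Your proposal is correct and follows essentially the same approach as the paper's proof: clauses (i)--(v) are argued by direct inspection of the mutually exclusive premises and the source/target shapes of the five SOS rules, and clauses (vi)--(vii) are obtained by a case analysis (equivalently, induction along the path) at each steady state using (i)--(iv). Your treatment is in fact slightly more explicit than the paper's in two places---you spell out the $r=r'$ subcase of \textsc{AdaptStartEnd} for (iv), where the paper simply writes $(i)\wedge(ii)\wedge(iii)\Rightarrow(iv)$, and you flag the bookkeeping that consecutive adapting steps preserve $\psi$ and $r'$---but the underlying argument is the same.
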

\begin{proof}
See~\ref{proof:prop1}
\end{proof}

\subsection{Termination}\label{subsect:term}
In an $S[B]$ system, termination cannot be compatible with adaptability. We see adaptability as the property for which a system \textit{continuously} operates under stable, allowed modes (steady states), by possibly perform adaptation paths across modes. 

In the flat semantics deadlocks occurring at adapting states, e.g.\ when the adaptation invariant cannot be met, are clearly conflicting with the concept of adaptability. Instead, deadlocks at steady states are more subtle to interpret, since they may occur under two different conditions:
\begin{itemize}
\item the current constraints cannot be satisfied by the next states of the current state, but, at the same time, adaptation cannot start because none of the next $B$ states meet any of the adaptation invariants and any of the target constraints. In other words, the flat semantics terminates even if the $B$ level can proceed. Evidently, this violates adaptability.
\item the $B$ level cannot progress at all. We consider this situation as a \textit{bad deadlock} state in the behavioural model. Conversely, every $B$ state indicating a \textit{good termination} should have the chance to progress and therefore must be modelled, as usual in this case, with an idling self-loop.
\end{itemize}

We capture the requirement for which the flat $S[B]$ must not terminate through the $\textsc{Progress}(q,r)$ predicate:
$$\textsc{Progress}(q,r) \iff (q,r,\emptyset)\goes{r} \ \vee \ (q,r,\emptyset)\goes{r,\psi,r'}$$

\subsection{Flat Semantics of the Motion Control Example}
The flat semantics of the two systems $S_0[B]$ and $S_1[B]$ implementing the ATV motion controller case study is depicted in Figure~\ref{fig:sbs_atv}.

Notably, the same behavioural level $B$ possesses different adaptation capabilities depending on the structure $S$ that is considered. Indeed, in $\mathcal{F}(S_0[B])$ every adaptation path leads to a target $S$ state. On the other hand, in $\mathcal{F}(S_1[B])$ there always exists an adaptation path leading to a target stable region, but it contains cycles of adapting states, thus leading to infinite adaptation paths. 

In other words, the behavioural level $B$ is able to successfully adapt under the structural level $S_0$, for \textit{all possible adaptation paths}. Thus, recalling the definitions introduced in Sect.~\ref{sect:intro}, $S_0[B]$ is \textit{strong adaptable}. Conversely, $B$ is able to successfully adapt under $S_1$, only for \textit{some adaptation paths}, i.e.\ the finite ones. Therefore, $S_1[B]$ is \textit{weak adaptable}. These two different kinds of adaptability are formalized in Section~\ref{sect:adaptability}.

\begin{figure}
\centering
\subfloat[]{\includegraphics[width=0.16\textwidth]{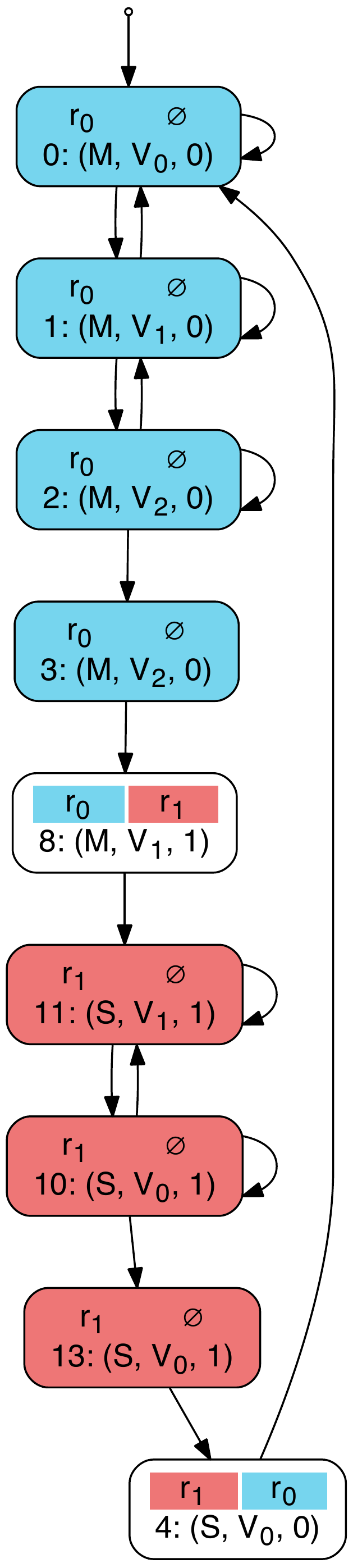}}\hfill
\subfloat[]{\includegraphics[width=0.3\textwidth]{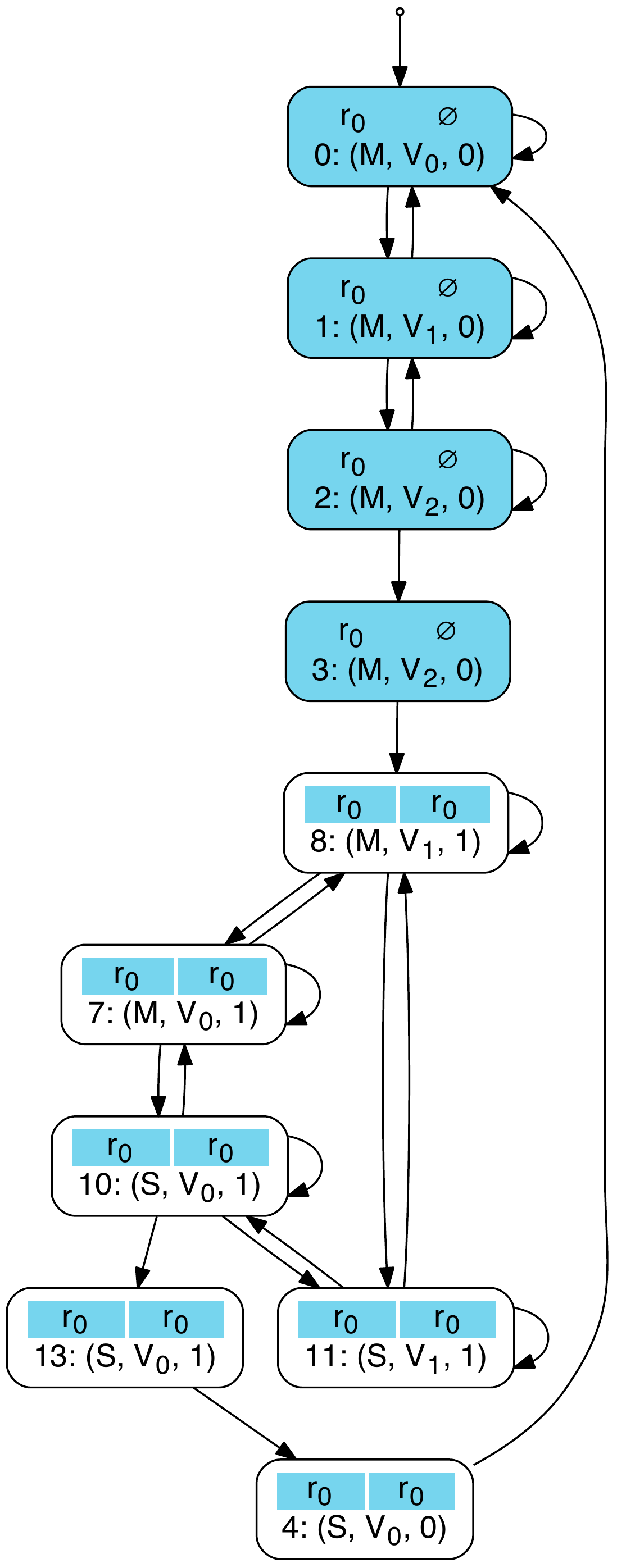}}
\caption{The flat semantics of the two systems $S_0[B]$ (Fig.~\ref{fig:sbs_atv} (a)) and $S_1[B]$ (Fig.~\ref{fig:sbs_atv} (b)) in the motion control model. For the sake of clarity, transition labels have been omitted. Since in both $S_0$ and $S_1$ there is at most one transition between two $S$ states, transition invariants have not been displayed, but can be found in the description of the model. Filled states represent steady (i.e.\ non-adapting) states, while void states represent adapting states (the current $S$ state and the target $S$ state are displayed in the top-left and top-right, respectively).
}
\label{fig:sbs_atv}
\end{figure}

\section{Adaptability Properties}
\label{sect:adaptability}
The transitional semantics introduced in Section~\ref{sect:semantics} does not guarantee that an adaptation phase can always start or that, once started, it always ends up in a state satisfying the constraints of the target $S$ state. In this section we want to give some formal tools to analyse a given system w.r.t.\ these kind of properties. As a first step, we characterise two adaptability notions by means of two relations over the set of $B$ states and the set of $S$ states, namely a \textit{weak adaptation relation} $\mathcal{R}_w$ and a \textit{strong adaptation relation} $\mathcal{R}_s$. Then, we characterise the same adaptability notions logically and we prove that they can be model checked by using proper formulae of a temporal logic. 


Informally, a state $q$ of $B$ is \emph{weak adaptable} to a state $r$ of $S$ if it satisfies the constraints imposed by $r$ and some of its successors are either weak adaptable to the same $r$ or there is an adaptation phase of the flat system that, from $q$, reaches a state $q'$ that is weak adaptable to another $S$ state $r'$. In other words, we require that states satisfying $L(r)$ are in relation with $r$ and that from ``border'' states, that is to say those that can start an adaptation phase for leaving $r$, there is always at least one way to safely reach another steady situation in another $S$ state $r'$. As explained in Sect.~\ref{subsect:term}, we prevent adaptation from terminating by applying the $\textsc{Progress}$ predicate that avoids those steady configurations in a state of bad termination. We formally define this relation using a co-inductive style as it is usually done, for instance, for bisimulation relations.

\begin{defn}[Weak adaptation]\label{def:weakadapt} Given an $S[B]$ system, a binary relation $\mathcal{R} \subseteq Q \times R$ is a \emph{weak adaptation} if and only if whenever $q \; \mathcal{R} \; r$ we have:
\begin{itemize}
\item[(i)] $q \models L(r)$ and $\textsc{Progress}(q,r)$, \emph{and}
\item[(ii)] if $(q,r,\emptyset) \goes{r}$ then there exists $q' \in Q$  such that $(q,r,\emptyset) \goes{r} (q',r,\emptyset)$ and $q' \; \mathcal{R} \; r$, \emph{and}
\item[(iii)] if $(q,r,\emptyset) \goes{r,\psi',r''}$ for some $\psi' \in \Psi(\Sigma,A)$ and $r'' \in R$ then there exist $q' \in Q, \ \psi \in \Psi(\Sigma,A)$ and $r' \in R$ such that $(q, r, \emptyset)\goes{r, \psi, r'}^{+}  \! (q', r', \emptyset)$ and $q' \; \mathcal{R} \; r'$.
\end{itemize}
We say that a state $q \in Q$ is \emph{weak adaptable} to a state $r \in R$, written $q|_{w} r$, if and only if there is a weak adaptation relation $\mathcal{R}$ such that $(q,r)$ is in $\mathcal{R}$.

At the level of the whole system, we say that $S[B]$ is \emph{weak adaptable} if the
initial $B$ state $q_{0}$ is weak adaptable to the initial $S$ state $r_{0}$.
\end{defn}

\begin{prop}[Union of Weak Adaptation Relations] Given an $S[B]$ system, if $\mathcal{R}_{1}$ and $\mathcal{R}_{2}$ are weak adaptation relations, then $\mathcal{R}_{1} \cup \mathcal{R}_{2}$ is a weak adaptation relation.
\end{prop}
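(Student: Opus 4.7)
The plan is to mimic the standard argument used for showing that the union of bisimulations is a bisimulation, since weak adaptation is defined co-inductively in exactly the same style. I would let $\mathcal{R} \equaldef \mathcal{R}_1 \cup \mathcal{R}_2$ and show directly that $\mathcal{R}$ satisfies clauses (i)--(iii) of Definition~\ref{def:weakadapt}, using the fact that $\mathcal{R}_i \subseteq \mathcal{R}$ for $i=1,2$ to promote witnesses from the component relations up to the union.

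More concretely, I would pick an arbitrary pair $(q,r) \in \mathcal{R}$ and split on which of the two component relations contains it; by symmetry it is enough to treat the case $(q,r) \in \mathcal{R}_1$. Clause (i) is immediate because $\mathcal{R}_1$ already guarantees $q \models L(r)$ and $\textsc{Progress}(q,r)$. For clause (ii), whenever $(q,r,\emptyset) \goes{r}$, the hypothesis on $\mathcal{R}_1$ yields a $q'$ with $(q,r,\emptyset) \goes{r} (q',r,\emptyset)$ and $q' \, \mathcal{R}_1 \, r$; since $\mathcal{R}_1 \subseteq \mathcal{R}$, the same $q'$ witnesses the clause for $\mathcal{R}$. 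Clause (iii) is analogous: an adapting transition from $(q,r,\emptyset)$ triggers the $\mathcal{R}_1$-witnesses $q'$, $\psi$, $r'$ giving an adaptation path $(q,r,\emptyset)\goes{r,\psi,r'}^{+}(q',r',\emptyset)$ with $q' \, \mathcal{R}_1 \, r'$, and again $\mathcal{R}_1 \subseteq \mathcal{R}$ lifts this to a witness for $\mathcal{R}$.

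There is no real obstacle here: the argument is a one-line monotonicity observation dressed up with case analysis on the two components. The only thing to be slightly careful about is not to confuse clause (iii) with a bisimulation-style matching condition: the witnessing path is a finite sequence of $\goes{r,\psi,r'}$ steps (as the ${}^{+}$ in the definition makes explicit), and we use exactly the same path for $\mathcal{R}$ that was supplied by $\mathcal{R}_1$, since nothing in the transition system depends on which adaptation relation we are reasoning about. Having verified all three clauses, I would conclude that $\mathcal{R}_1 \cup \mathcal{R}_2$ is itself a weak adaptation relation, which is what was required.
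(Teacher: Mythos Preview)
Your proposal is correct and follows essentially the same approach as the paper: pick a pair in the union, case-split on which component relation contains it, and verify the three clauses by reusing the witnesses provided by that component, promoting them to the union via the inclusion $\mathcal{R}_i \subseteq \mathcal{R}_1 \cup \mathcal{R}_2$. The paper's proof is terser but makes exactly the same monotonicity argument.
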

\begin{proof}
See~\ref{proof:prop2}
\end{proof}

\begin{defn}[Weak adaptability] Given an $S[B]$ system, the union of all weak adaptation relations among the states $Q$ and $R$ of $S[B]$ is denoted by $\mathcal{R}_{w}$ and is the \emph{weak adaptability} relation of $S[B]$.
\end{defn}

\begin{lem}[Propagation of Weak Adaptation Relation]\label{lem:propagation_weak} Consider an $S[B]$ system and let $q$ and $r$ be such that $q |_{w} r$. Then there exists in $\mathcal{F}(S[B])$ an infinite path
$$
\begin{array}{l}
\pi = (q= q_0, r=r_0,\emptyset)
(\goes{r_{0}})^{m_0}(\goes{r_{0},\psi_{0},r_{1}})^{n_0} \cdots \\
\cdots (q_i,r_i,\emptyset)
 (\goes{r_{i}})^{m_i}(\goes{r_{i},\psi_{i},r_{i+1}})^{n_i}
 (q_{i+1},r_{i+1},\emptyset) \cdots 
 \end{array}
$$
such that $\forall i \geq 0 \; . \; q_{i} |_{w} r_{i} \wedge ((m_i = 1 \wedge n_i = 0) \vee (m_i = 0 \wedge n_i > 0))$.
\end{lem}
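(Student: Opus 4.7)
The plan is to build the required infinite path by iterated application of the weak adaptation conditions, essentially a coinductive construction realised by dependent choice. We process the path in ``blocks'' indexed by $i \geq 0$: at the start of block $i$ we have a triple $(q_i, r_i, \emptyset)$ with the invariant $q_i \mid_{w} r_i$, and we produce the next segment $(\goes{r_i})^{m_i}(\goes{r_i,\psi_i,r_{i+1}})^{n_i}$ together with the new state $(q_{i+1}, r_{i+1}, \emptyset)$ again satisfying $q_{i+1} \mid_{w} r_{i+1}$.

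First, observe that the base case $i=0$ is immediate: by hypothesis $q_0 = q$, $r_0 = r$ and $q_0 \mid_{w} r_0$. For the inductive step, suppose $(q_i, r_i, \emptyset)$ has been constructed with $q_i \mid_{w} r_i$. By clause (i) of Definition~\ref{def:weakadapt} we have both $q_i \models L(r_i)$ and $\textsc{Progress}(q_i, r_i)$, so at least one of $(q_i,r_i,\emptyset) \goes{r_i}$ or $(q_i,r_i,\emptyset) \goes{r_i,\psi,r''}$ holds. Items (i) and (ii) of Proposition~\ref{prop:semantics} guarantee these two alternatives are mutually exclusive, so exactly one of the two cases applies and determines the form of the current block.

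In the steady case we apply clause (ii) of Definition~\ref{def:weakadapt}: it yields $q_{i+1} \in Q$ with $(q_i,r_i,\emptyset) \goes{r_i} (q_{i+1},r_i,\emptyset)$ and $q_{i+1} \mid_{w} r_i$. We then set $m_i = 1$, $n_i = 0$, and $r_{i+1} = r_i$, preserving the invariant for the next round. In the adapting case we apply clause (iii): it yields some $\psi_i \in \Psi(\Sigma,A)$, $r_{i+1} \in R$ and $q_{i+1} \in Q$ with $(q_i,r_i,\emptyset) \goes{r_i,\psi_i,r_{i+1}}^{+} (q_{i+1},r_{i+1},\emptyset)$ and $q_{i+1} \mid_{w} r_{i+1}$. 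Setting $m_i = 0$ and $n_i$ equal to the length of this finite adaptation sequence completes the block, and again $q_{i+1} \mid_{w} r_{i+1}$ re-establishes the invariant.

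Iterating this choice for all $i \geq 0$ (using dependent choice) produces an infinite path of the required shape, since at each block a non-empty segment is appended — either exactly one steady step ($m_i + n_i = 1$) or a non-empty adaptation sequence ($n_i > 0$) — and the invariant $q_i \mid_{w} r_i$ is preserved throughout. The only mild subtlety is the mutual exclusion between the steady and adapting alternatives at each $(q_i,r_i,\emptyset)$, but this is exactly what Proposition~\ref{prop:semantics} supplies, so no further obstacle arises.
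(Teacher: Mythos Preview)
Your proof is correct and follows essentially the same approach as the paper: an inductive construction of the path block by block, using $\textsc{Progress}$ to guarantee continuation and clauses~(ii) and~(iii) of Definition~\ref{def:weakadapt} to handle the steady and adapting cases respectively. Your version is in fact slightly more explicit, invoking Proposition~\ref{prop:semantics}(i)--(ii) for the mutual exclusion of the two cases and naming dependent choice for the iteration, whereas the paper leaves these points implicit.
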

\begin{proof}
See~\ref{proof:lemma1}
\end{proof}

Weak adaptability guarantees that there is always at least one way for a certain state of an $S[B]$ system to adapt, that is to say to continue to evolve in a consistent way w.r.t.\ the structural constraints of the $S$ level. A stronger property that could be useful to know about the adaptability of a system is what we call \emph{strong adaptability}. A $B$ state $q$ is \emph{strong adaptable} to an $S$ state $r$ if it satisfies the constraints imposed by $r$ and all its successors $q'$ are either strong adaptable to the same $r$ or they are always the starting point of a successful adaptation phase towards states $q''$ that are strong adaptable to other $S$ states. Again, bad deadlocks are excluded from the relations. This time we require that all the ``border'' states are safe doors to other steady situations, whatever path is taken from them.

\begin{defn}[Strong adaptation]\label{def:strongadapt}
Given an $S[B]$ system, a binary relation $\mathcal{R} \subseteq Q \times R$ is a \emph{strong adaptation} if and only if whenever $q \; \mathcal{R} \; r$ we have:
\begin{itemize}
\item[(i)] $q \models L(r)$ and $\textsc{Progress}(q,r)$, \emph{and}
\item[(ii)] for all $q' \in Q$,  if $(q,r,\emptyset) \goes{r} (q',r,\emptyset)$ then $q' \; \mathcal{R} \; r$, \emph{and}
\item[(iii)] all paths of the form
$$
(q,r,\emptyset) \goes{r,\psi,r'} (q_{1}, r, \{(\psi, r') \} )\goes{r,\psi,r'} 
\cdots 
(q_{i}, r, \{(\psi, r') \}) \goes{r,\psi,r'} \cdots
$$ 
are finite and end up in a state $(q', r', \emptyset)$ such that $q' \; \mathcal{R} \; r'$.
\end{itemize}
We say that a state $q \in Q$ is \emph{strong adaptable} to a state $r \in R$, written $q|_{s} r$, if and only if there is a strong adaptation relation $\mathcal{R}$ such that $(q,r)$ is in $\mathcal{R}$.

At the level of the whole system, we say that $B$ is \emph{strong adaptable} to $S$ if the
initial $B$ state $q_{0}$ is strong adaptable to the initial $S$ state $r_{0}$.
\end{defn}

\begin{prop}[Union of Strong Adaptation Relations]\label{prop:unionstrong} Given an $S[B]$ system, if $\mathcal{R}_{1}$ and $\mathcal{R}_{2}$ are strong adaptation relations, then $\mathcal{R}_{1} \cup \mathcal{R}_{2}$ is a strong adaptation relation.
\end{prop}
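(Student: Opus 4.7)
The plan is to mimic the structure of the proof of the analogous Proposition for weak adaptation relations that immediately precedes this statement. I would set $\mathcal{R} = \mathcal{R}_{1} \cup \mathcal{R}_{2}$ and verify directly, clause by clause, that $\mathcal{R}$ satisfies Definition~\ref{def:strongadapt}. Fix any pair $(q,r) \in \mathcal{R}$; by definition of union, either $(q,r) \in \mathcal{R}_{1}$ or $(q,r) \in \mathcal{R}_{2}$, and without loss of generality I would assume $(q,r) \in \mathcal{R}_{1}$ (the other case being perfectly symmetric).

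Clause (i) is immediate since $\mathcal{R}_{1}$ is itself a strong adaptation and therefore already guarantees $q \models L(r)$ and $\textsc{Progress}(q,r)$. For clause (ii), I take any steady successor $(q,r,\emptyset) \goes{r} (q',r,\emptyset)$; the strong adaptation property of $\mathcal{R}_{1}$ yields $q' \; \mathcal{R}_{1} \; r$, and the inclusion $\mathcal{R}_{1} \subseteq \mathcal{R}_{1} \cup \mathcal{R}_{2} = \mathcal{R}$ lifts this to $q' \; \mathcal{R} \; r$, as required. For clause (iii), I consider any maximal adapting sequence $(q,r,\emptyset) \goes{r,\psi,r'} (q_{1},r,\{(\psi,r')\}) \goes{r,\psi,r'} \cdots$; again the fact that $\mathcal{R}_{1}$ is a strong adaptation forces every such path to be finite and to end in a state $(q',r',\emptyset)$ with $q' \; \mathcal{R}_{1} \; r'$, and the same inclusion gives $q' \; \mathcal{R} \; r'$.

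I do not expect any genuine obstacle: the argument is the standard co-inductive pattern, identical in spirit to the proof that a union of bisimulations is a bisimulation. The defining clauses of strong adaptation demand only that witnesses lie in the candidate relation, so enlarging the relation preserves witnessing. The only point that deserves a line of attention is clause (iii), where the quantification is universal over adaptation paths rather than existential: one must recall that the property already holds for \emph{all} such paths out of $(q,r)$ thanks to $\mathcal{R}_{1}$, so no interaction between $\mathcal{R}_{1}$ and $\mathcal{R}_{2}$ needs to be analysed. A concluding sentence would then observe that the same reasoning applies in the symmetric case $(q,r) \in \mathcal{R}_{2}$, so $\mathcal{R}_{1} \cup \mathcal{R}_{2}$ indeed satisfies all three clauses and is a strong adaptation relation.
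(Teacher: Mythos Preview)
Your proposal is correct and follows exactly the approach the paper intends: the paper's proof is simply ``As in the case of weak adaptation relations,'' and your argument spells out precisely that pattern, verifying each clause of Definition~\ref{def:strongadapt} via the inclusion $\mathcal{R}_{1} \subseteq \mathcal{R}_{1} \cup \mathcal{R}_{2}$ (and symmetrically for $\mathcal{R}_{2}$).
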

\begin{proof}
As in the case of weak adaptation relations.
\end{proof}

\begin{defn}[Strong adaptability] Given an $S[B]$ system, the union of all strong adaptation relations among the states $Q$ and $R$ of $S[B]$ is denoted by $\mathcal{R}_{s}$ and is the \emph{strong adaptability} relation of $S[B]$.
\end{defn}

In the remainder of the paper we will alternatively say that an $S[B]$ system is weak (strong) adaptable, in the sense that $B$ is weak (strong) adaptable to $S$. It is straightforward to see that strong adaptability implies weak adaptability, since the strong version of the relation requires that every adaptation path reaches a target $S$ state, while the weak version just requires that at least one adaptation path reaches a target $S$ state. 

\begin{prop}[Strong Adaptation implies Weak Adaptation] Consider an $S[B]$ system and let $q$ and $r$ be such that $q |_{s} r$. Then, it holds $q |_{w} r$.
\end{prop}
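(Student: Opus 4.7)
The plan is to show that the strong adaptability relation $\mathcal{R}_{s}$ itself satisfies the three clauses of Definition~\ref{def:weakadapt}, and is therefore a weak adaptation relation. Once this is established, for any $(q,r)$ with $q |_{s} r$ we have $(q,r) \in \mathcal{R}_{s}$, and so $(q,r)$ belongs to a weak adaptation relation, yielding $q|_{w} r$ directly from the definition of weak adaptability.

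So let $(q,r) \in \mathcal{R}_{s}$ be arbitrary. Clause (i) of weak adaptation, namely $q \models L(r)$ and $\textsc{Progress}(q,r)$, is literally the same as clause (i) of strong adaptation and therefore follows immediately. For clause (ii), assume $(q,r,\emptyset) \goes{r}$. Then there exists some $q' \in Q$ with $(q,r,\emptyset) \goes{r} (q',r,\emptyset)$ (by definition of the shorthand $\goes{r}$ without target). Strong adaptation clause (ii) requires that \emph{every} such $q'$ satisfies $q' \; \mathcal{R}_{s} \; r$, so in particular this witness does; hence the existential required by weak clause (ii) is fulfilled with the same $q'$ and the relation $\mathcal{R}_{s}$.

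For clause (iii), assume $(q,r,\emptyset) \goes{r,\psi',r''}$ for some $\psi'$ and $r''$. This single adapting step extends (possibly trivially) to a maximal path of the form
$$(q,r,\emptyset) \goes{r,\psi',r''} (q_{1}, r, \{(\psi', r'')\}) \goes{r,\psi',r''} \cdots$$
Strong adaptation clause (iii) guarantees that every such path is finite and ends in a steady state $(q^{*}, r'', \emptyset)$ with $q^{*} \; \mathcal{R}_{s} \; r''$. Taking the non-empty chain of $\goes{r,\psi',r''}$ steps that leads to this terminal state, we obtain
$(q,r,\emptyset) \goes{r,\psi',r''}^{+} (q^{*}, r'', \emptyset)$
with $q^{*} \; \mathcal{R}_{s} \; r''$, which is precisely the existential witness demanded by weak clause (iii) (instantiate $\psi = \psi'$ and $r' = r''$).

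The argument is essentially a routine reading of universal quantifiers as existential witnesses, so there is no substantial obstacle; the only point that needs minor care is observing that the termination of adapting paths (granted by strong clause (iii)) always yields a well-defined witness $(q^{*}, r'', \emptyset)$ reached by a non-empty sequence of adaptation steps, so that the $\goes{}^{+}$ quantification in weak clause (iii) is honoured rather than $\goes{}^{*}$.
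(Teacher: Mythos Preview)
Your argument is correct. Showing directly that $\mathcal{R}_{s}$ satisfies the three clauses of Definition~\ref{def:weakadapt} is sound: clause~(i) is shared verbatim, the universal in strong clause~(ii) trivially yields the existential in weak clause~(ii), and for clause~(iii) the guaranteed termination of every adapting path in a steady pair belonging to $\mathcal{R}_{s}$ provides the required $\goes{}^{+}$ witness.

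The paper takes a different route. Rather than reusing $\mathcal{R}_{s}$, it starts from a strong adaptation relation $\mathcal{R}$ containing $(q,r)$ and iteratively builds a smaller relation $\mathcal{R}'$: at each stage it \emph{selects one} steady successor (when $\goes{r}$ applies) or \emph{one} adaptation target reached via a successful adaptation path (when $\goes{r,\psi,r'}$ applies), adds that single pair to $\mathcal{R}'$, and repeats until closure; finiteness of the state space guarantees termination. This produces a concrete, typically proper, subset of $\mathcal{R}$ that is a weak adaptation relation. Your approach avoids this bookkeeping entirely by observing that the full strong relation is already weak, which is shorter and conceptually cleaner. What the paper's construction buys is an explicitly minimal-style witness (essentially a deterministic choice function through the co-inductive structure), but that extra information is not needed for the proposition as stated.
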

\begin{proof}
See~\ref{proof:prop4}
\end{proof}

Given the flat semantics $\mathcal{F}(S[B])=(F,f_0,\goes{r} \cup \goes{r,\psi,r'})$ of an $S[B]$ system, we will denote, in the following, the set of reachable states from a certain state $f \in F$ as the reflexive and transitive closure $Post^*(f)$ of the operator $Post(s) = \{ s' \in F \ | \ (s,s') \in \ \goes{r} \cup \goes{r,\psi,r'}\}$.

\begin{lem}[Propagation of Strong Adaptation Relation]\label{lem:propagation} Consider an $S[B]$ system and let $q$ and $r$ be such that $q |_{s} r$. Then, every state $(q',r',\emptyset) \in Post^{*}((q,r,\emptyset))$ is such that $q' |_{s} r'$.
\end{lem}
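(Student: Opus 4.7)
The plan is to proceed by induction on the length $n$ of a path in the flat LTS from $(q,r,\emptyset)$ to $(q',r',\emptyset)$. Since $(q',r',\emptyset) \in Post^{*}((q,r,\emptyset))$, there exists at least one such finite path, and we can reason about the shortest one (or, equivalently, any chosen witness path). The induction variable is the number of transitions along this path, and the inductive hypothesis will say: for any shorter path from $(q,r,\emptyset)$ ending in a steady state $(q'',r'',\emptyset)$, one has $q'' |_{s} r''$.

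For the base case $n=0$, we have $(q',r',\emptyset) = (q,r,\emptyset)$ and the conclusion $q' |_s r'$ is just the hypothesis $q |_s r$. For the inductive step, I would inspect the first transition of the path and split into two cases. If the first transition is a steady one, $(q,r,\emptyset) \goes{r} (q_1, r, \emptyset)$, then clause (ii) of Definition~\ref{def:strongadapt} applied to $q |_s r$ gives $q_1 |_s r$, and the inductive hypothesis applied to the suffix path of length $n-1$ from $(q_1,r,\emptyset)$ to $(q',r',\emptyset)$ yields the desired $q' |_s r'$. If instead the first transition is an adapting one, $(q,r,\emptyset) \goes{r,\psi,r''} (q_1, r, \{(\psi,r'')\})$, then by Proposition~\ref{prop:semantics}(vi) the path cannot reach a steady state until the current adaptation phase terminates; thus the prefix of the path up to the first steady state visited after $(q,r,\emptyset)$ is an adaptation path in the sense of clause (iii) of Definition~\ref{def:strongadapt}. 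That clause guarantees that this adaptation path is finite and ends in some $(q_m, r'', \emptyset)$ with $q_m |_s r''$. Since the adaptation phase consumed at least one transition, the remaining path from $(q_m, r'', \emptyset)$ to $(q',r',\emptyset)$ has length strictly less than $n$, so the inductive hypothesis delivers $q' |_s r'$.

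The step that deserves the most care is the second case, because the initial adapting transition from $(q,r,\emptyset)$ is not directly covered by a single clause of Definition~\ref{def:strongadapt}; rather, one must argue that the fragment of the given path lying inside the adaptation phase is exactly one of the adaptation paths ranged over in clause (iii). This is where Proposition~\ref{prop:semantics} is essential: parts (iii) and (v) ensure that once adaptation starts the third component of the state is nonempty until an $\goes{r,\psi,r''}$ transition lands in a steady configuration $(q_m, r'', \emptyset)$, so the path's intermediate states before returning to a steady configuration form a genuine adaptation path in the sense of Definition~\ref{def:strongadapt}(iii). Together with the observation that adaptation truly decreases the remaining path length (at least one, and possibly several, adapting transitions are consumed before reentering a steady state, even in the \textsc{AdaptStartEnd} case where a single transition suffices), the induction closes and the lemma follows.
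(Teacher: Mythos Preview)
Your proposal is correct and follows essentially the same approach as the paper. The paper's proof also proceeds by induction along a path from $(q,r,\emptyset)$, invoking the structured path form of Proposition~\ref{prop:semantics}(vi) and then distinguishing the same two cases (a steady transition, handled by clause~(ii), versus an adaptation segment, handled by clause~(iii)); the only cosmetic difference is that the paper inducts on the index $i$ of the steady states $(q_i,r_i,\emptyset)$ visited along the path, whereas you induct on the total number of transitions and peel off the first block.
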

\begin{proof}
See~\ref{proof:lemma2}
\end{proof}

The following proposition gives a precise candidate relation for checking if a system is strong adaptable: such a candidate is determined by the steady states of the flat semantics that are reachable from the initial state.

\begin{prop}[Construction of Strong Adaptation Relation]\label{prop:prop_adapt}
Given an $S[B]$ system, let $\mathcal{F}(S[B])=(F,f_0,\goes{r} \cup \goes{r,\psi,r'})$ be its flat semantics. 
Then $S[B]$ is strong adaptable if and only if $\mathcal{R} = \{ (q,r) \in Q \times R \mid (q,r,\emptyset) \in Post^{*}(f_{0})\}$ is a strong adaptation relation.
\end{prop}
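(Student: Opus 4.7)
The plan is to prove the biconditional by splitting it into its two directions, observing that one direction is essentially immediate from the definitions while the other reduces, via Lemma~\ref{lem:propagation}, to a bookkeeping argument about reachable steady states.

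For the easy direction (``if'' part), assume $\mathcal{R}$ is a strong adaptation relation. Since $f_0 = (q_0, r_0, \emptyset)$ lies in $Post^{*}(f_0)$ by reflexivity of the reflexive-transitive closure, we immediately get $(q_0, r_0) \in \mathcal{R}$, hence $q_0 |_{s} r_0$, which by Definition~\ref{def:strongadapt} is precisely what it means for $S[B]$ to be strong adaptable. Note that well-definedness of $f_0$ as an element of $F$ uses $q_0 \models L(r_0)$ from Definition~\ref{def:sb}.

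For the harder direction (``only if''), assume $S[B]$ is strong adaptable, so $q_0 |_{s} r_0$. By Lemma~\ref{lem:propagation}, every steady state $(q', r', \emptyset) \in Post^{*}(f_0)$ satisfies $q' |_{s} r'$. Hence, for every $(q, r) \in \mathcal{R}$ there exists \emph{some} strong adaptation relation $\mathcal{R}'$ containing $(q, r)$ which witnesses conditions (i)--(iii) of Definition~\ref{def:strongadapt}. I would then verify that the same three conditions hold with $\mathcal{R}$ in place of $\mathcal{R}'$. Condition (i), $q \models L(r)$ and $\textsc{Progress}(q,r)$, involves only $(q,r)$ and transfers verbatim. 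For condition (ii), any successor $q'$ with $(q,r,\emptyset) \goes{r} (q',r,\emptyset)$ yields $(q',r,\emptyset) \in Post^{*}(f_0)$ by transitivity of $Post^{*}$, so $(q', r) \in \mathcal{R}$. For condition (iii), every maximal adaptation path starting in $(q, r, \emptyset)$ is finite and ends in some $(q'', r', \emptyset)$ because $\mathcal{R}'$ witnesses this property; moreover $(q'', r', \emptyset) \in Post^{*}(f_0)$ again by transitivity, so $(q'', r') \in \mathcal{R}$.

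The only real subtlety is condition (iii), where I need to handle two aspects simultaneously: finiteness of adaptation paths (which is inherited from the witnessing relation $\mathcal{R}'$, not from $\mathcal{R}$ itself) and the fact that their endpoints are $\mathcal{R}$-related (which is the purely topological transitivity observation). The main obstacle is therefore ensuring a clean separation of these two concerns in the write-up; no genuinely hard mathematics is required beyond invoking Lemma~\ref{lem:propagation} and chasing reachability through the flattened LTS.
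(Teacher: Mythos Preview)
Your proof is correct. The ``if'' direction matches the paper's argument exactly. For the ``only if'' direction, your approach differs from the paper's: the paper collects, for each reachable steady pair $(q,r)$, a witnessing strong adaptation relation $\mathcal{R}_{q,r}$, forms the union $\hat{\mathcal{R}} = \bigcup \mathcal{R}_{q,r}$, invokes Proposition~\ref{prop:unionstrong} to conclude that $\hat{\mathcal{R}}$ is a strong adaptation relation, and then argues that $\hat{\mathcal{R}} = \mathcal{R}$. You instead verify conditions (i)--(iii) of Definition~\ref{def:strongadapt} for $\mathcal{R}$ directly, using the witnessing relations only to extract condition~(i) and the finiteness clause of~(iii), and using transitivity of $Post^{*}$ for condition~(ii) and the endpoint-membership clause of~(iii). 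Your route is slightly more elementary in that it does not rely on Proposition~\ref{prop:unionstrong}, and it sidesteps the paper's equality step $\hat{\mathcal{R}} = \mathcal{R}$ (whose inclusion $\hat{\mathcal{R}} \subseteq \mathcal{R}$ is not entirely transparent, since an arbitrary witnessing $\mathcal{R}_{q,r}$ could in principle contain pairs outside the reachable set). Both arguments hinge on Lemma~\ref{lem:propagation} in the same way.
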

\begin{proof}
See~\ref{proof:prop5}
\end{proof}

%
\subsection{Adaptation Relations in the Motion Control Example}
In the following we will show that in the ATV motion control case study $S_0[B]$ is strong adaptable (and thus also weak adaptable) and $S_1[B]$ is weak adaptable, but not strong adaptable.

In order to verify that $S_0[B]$ is strong adaptable, we need to prove that $q_0|_{s} r_0$, by finding a strong adaptation relation $\mathcal{R}$ s.t.\ $(q_0, r_0) \in \mathcal{R}$. Note that in $\mathcal{F}(S_0[B])$, every state is reachable from the initial state $(0, r_0, \emptyset)$. Therefore by Proposition~\ref{prop:prop_adapt}, we consider the relation $\mathcal{R} =\{(q,r) \ | \ (q,r,\emptyset) \in F\}$, where $F$ is the set of flat states of $\mathcal{F}(S_0[B])$:
\[
\mathcal{R} = \{ (0, r_0), (1, r_0), (2, r_0), (3, r_0), (11, r_1), (10, r_1), (13, r_1)\}.
\]
It is easy to verify that $\forall (q,r) \in \mathcal{R}. \ q \models L(r)$; and that $\textsc{Progress}(q,r)$ holds for any of such states, because there are no deadlock states in the flat semantics. Therefore condition $(i)$ of the definition of strong adaptation is always true and has not to be further checked. Clearly, $(q_0, r_0) \in \mathcal{R}$. We show that $\mathcal{R}$ is a strong adaptation relation, by checking requirements $(ii)$ and $(iii)$ of Def.~\ref{def:strongadapt} for each element of $\mathcal{R}$.
\begin{small}
\begin{itemize}
\item $(0$:$(M,V_0,0), r_0)$.
\begin{itemize}
\item[$(ii)$] $(0, r_0, \emptyset)\goes{r_0}(0, r_0, \emptyset)$ and $(0, r_0) \in \mathcal{R}$; $(0, r_0, \emptyset)\goes{r_0}(1, r_0, \emptyset)$ and \\$(1, r_0) \in \mathcal{R}$
\item[$(iii)$] $(0, r_0, \emptyset)\goes{r,\psi,r'}\! \! \! \! \! \! \not$
\end{itemize}
\item $(1$:$(M,V_1,0), r_0)$.
\begin{itemize}
\item[$(ii)$] $(1, r_0, \emptyset)\goes{r_0}(0, r_0, \emptyset)$ and $(0, r_0) \in \mathcal{R}$; $(1, r_0, \emptyset)\goes{r_0}(1, r_0, \emptyset)$ and \\$(1, r_0) \in \mathcal{R}$; $(1, r_0, \emptyset)\goes{r_0}(2, r_0, \emptyset)$ and $(2, r_0) \in \mathcal{R}$
\item[$(iii)$] $(1, r_0, \emptyset)\goes{r,\psi,r'}\! \! \! \! \! \! \not$
\end{itemize}
\item $(2$:$(M,V_2,0), r_0)$.
\begin{itemize}
\item[$(ii)$] $(2, r_0, \emptyset)\goes{r_0}(1, r_0, \emptyset)$ and $(1, r_0) \in \mathcal{R}$; $(2, r_0, \emptyset)\goes{r_0}(2, r_0, \emptyset)$ and\\$(2, r_0) \in \mathcal{R}$; $(2, r_0, \emptyset)\goes{r_0}(3, r_0, \emptyset)$ and $(3, r_0) \in \mathcal{R}$
\item[$(iii)$] $(2, r_0, \emptyset)\goes{r,\psi,r'}\! \! \! \! \! \! \not$
\end{itemize}
\item $(3$:$(M,V_2,0), r_0)$.
\begin{itemize}
\item[$(ii)$] $(3, r_0, \emptyset)\not\goes{r}$
\item[$(iii)$] there is only one adaptation path from $(3, r_0, \emptyset)$ leading to the flat state $(11, r_1, \emptyset)$, and $(11, r_1) \in \mathcal{R}$.
\end{itemize}
\item $(11$:$(S, V_1, 1), r_1)$.
\begin{itemize}
\item[$(ii)$] $(11, r_1, \emptyset)\goes{r_1}(11, r_1, \emptyset)$ and $(11, r_1) \in \mathcal{R}$; $(11, r_1, \emptyset)\goes{r_1}(10, r_1, \emptyset)$ and $(10, r_1) \in \mathcal{R}$
\item[$(iii)$] $(11, r_1, \emptyset)\goes{r,\psi,r'}\! \! \! \! \! \! \not$
\end{itemize}
\item $(10$:$(S, V_0, 1), r_1)$.
\begin{itemize}
\item[$(ii)$] $(10, r_1, \emptyset)\goes{r_1}(11, r_1, \emptyset)$ and $(11, r_1) \in \mathcal{R}$; $(10, r_1, \emptyset)\goes{r_1}(10, r_1, \emptyset)$ and $(10, r_1) \in \mathcal{R}$; $(10, r_1, \emptyset)\goes{r_1}(13, r_1, \emptyset)$ and $(13, r_1) \in \mathcal{R}$
\item[$(iii)$] $(10, r_1, \emptyset)\goes{r,\psi,r'}\! \! \! \! \! \! \not$
\end{itemize}
\item $(13$:$(S,V_0,1), r_1)$.
\begin{itemize}
\item[$(ii)$] $(13, r_1, \emptyset)\not\goes{r}$
\item[$(iii)$] there is only one adaptation path from $(13, r_1, \emptyset)$ leading to the flat state $(0, r_0, \emptyset)$, and $(0, r_0) \in \mathcal{R}$.
\end{itemize}
\end{itemize}
\end{small}
On the other hand, we demonstrate that $S_1[B]$ is weak adaptable, by finding a weak adaptation relation $\mathcal{R}$ s.t.\ $(q_0, r_0) \in \mathcal{R}$. Consider the following relation:
\[
\begin{array}{rl}
\mathcal{R} = &\{(0, r_0), (1, r_0), (2, r_0), (3, r_0)\}.
\end{array}
\]
Similarly to $S_0[B]$, $(q_0,r_0) \in \mathcal{R}$ and for all $(q,r) \in \mathcal{R}$, $\ q \models L(r)$ and $\textsc{Progress}(q,r)$ both holds. Thus, we need to check requirements $(ii)$ and $(iii)$ of Def.~\ref{def:weakadapt} to prove that $\mathcal{R}$ is a weak adaptation relation. 

Actually, the elements $(0,r_0)$, $(1,r_0)$, $(2,r_0)$ of $\mathcal{R}$ meet the requirements $(ii)$ and $(iii)$ of the strong adaptation definition, as illustrated before. Thus, they also meet the weak requirements. The element $(3,r_0)$ complies with the weak adaptation definition, since $(3, r_0, \emptyset)\goes{r_0, \psi, r_0}^+(0, r_0, \emptyset)$ and $(0, r_0) \in \mathcal{R}$. However, $(3,r_0)$ cannot be in any strong relation (but by definition must be in the weak relation $\mathcal{R}$)  because there are infinite adaptation paths starting from it. It implies that $\mathcal{R}$ is a weak and not strong adaptation relation.

\section{Logical Characterisation of Adaptability Properties}
\label{sec:logic}
In this section we formulate the adaptability properties introduced in Section~\ref{sect:adaptability} in terms of formulae of a temporal logic that can be model checked \cite{katoen2008,clarke1999model}.

To this purpose we briefly recall the well-known \textit{Computation Tree Logic (CTL)}
\cite{clarke1986automatic,clarke81}, a branching-time logic whose semantics is defined in terms of paths along a Kripke structure \cite{kripke63}. Given a set $AP$ of atomic propositions, a Kripke structure is a tuple $(T,t_{0},\goes{}_{k},I)$ where $T$ is a finite set of states, $t_{0}$ is the initial state, $\goes{}_{\kappa} \subseteq T \times T$ is a left-total transition relation and $I \colon T \rightarrow 2^{AP}$ maps each state to the set of atomic propositions that are true in that state. Given a state $t \in T$, a path $\pi$ starting from $t$ has the form $\pi \colon t=t_{0} \goes{}_{\kappa} t_{1} \goes{}_{\kappa} t_{2} \goes{}_{\kappa} \cdots$, where for all $i = 1, 2, \ldots, (t_{i-1},t_{i}) \in \goes{}_{\kappa}$. Given a path $\pi$ and an index $i > 0$, by $\pi[i]$ we denote the $i$-th state along the path $\pi$. The set of all paths starting from $t$ is denoted by $\mathit{Paths}(t)$. Note that, since the transition relation is required to be left-total, all runs are infinite. To model a deadlocked or terminated state in a Kripke structure the modeller must put a self-cycle on that state.

The set of well-formed CTL formulae are given by the following grammar:
\[
\begin{array}{rl}
\phi ::= & \mathit{true} \ | \ p \ | \ \neg \phi  \ | \ \phi \wedge \phi   \ | \  \mathbf{AX} \phi \ | \ \mathbf{EX} \phi \ | \ \mathbf{A}[\phi \mathbf{U} \phi] \ | \ \mathbf{E}[\phi \mathbf{U} \phi]
\end{array}
\]
where $p \in AP$ is an atomic proposition, logical operators are minimal ($\neg, \wedge$) in order to generate all the usual ones, and temporal operators ($\mathbf{X}$ next, $\mathbf{U}$ until) quantify along paths and must be preceded by the universal path quantifier $\mathbf{A}$ or by the existential path quantifier $\mathbf{E}$. 

Given a state $t$ of the underlying Kripke structure, the satisfaction of a CTL formula $\phi$ in $t$, written $t \models_{\mathrm{CTL}} \phi$, is defined inductively as follows. 
$$
\begin{array}{lcl}
t \models_{\mathrm{CTL}} \mathit{true} & \mbox{    } & \mbox{for all } t\\
t \models_{\mathrm{CTL}} p & \mbox{  iff  } & p \in I(t)\\
t \models_{\mathrm{CTL}} \neg \phi & \mbox{  iff  } & t \not \models_{\mathrm{CTL}} \phi\\
t \models_{\mathrm{CTL}} \phi_{1} \wedge \phi_{2} & \mbox{  iff  } & t  \models_{\mathrm{CTL}} \phi_{1} \mbox{ and } t  \models_{\mathrm{CTL}} \phi_{2}\\
t \models_{\mathrm{CTL}} \mathbf{AX} \phi & \mbox{  iff  } & \forall \pi \in \mathit{Paths}(t) . \pi[1] \models_{\mathrm{CTL}} \phi \\
t \models_{\mathrm{CTL}} \mathbf{EX} \phi & \mbox{  iff  } & \exists \pi \in \mathit{Paths}(t) \colon \pi[1] \models_{\mathrm{CTL}} \phi \\
t \models_{\mathrm{CTL}} \mathbf{A}[\phi_{1} \mathbf{U} \phi_{2}] & \mbox{  iff  } & \forall \pi \in \mathit{Paths}(t) . \exists j \geq 0 \colon ( \pi[j] \models_{\mathrm{CTL}} \phi_{2} \mbox{ and }\\
 & & \;\;\;\;\;\;\;\;\;\;\;\;\;\;\;\;\;\;\;\;\;\;\;\;\;\;\;\;\;\;\;\;\;\;\; \forall  0 \leq i < j . \pi[i] \models_{\mathrm{CTL}} \phi_{1}) \\
t \models_{\mathrm{CTL}} \mathbf{E}[\phi_{1} \mathbf{U} \phi_{2}] & \mbox{  iff  } & \exists \pi \in \mathit{Paths}(t) \colon \exists j \geq 0 \colon ( \pi[j] \models_{\mathrm{CTL}} \phi_{2} \mbox{ and }\\
 & & \;\;\;\;\;\;\;\;\;\;\;\;\;\;\;\;\;\;\;\;\;\;\;\;\;\;\;\;\;\;\;\;\;\;\;\; \forall  0 \leq i < j . \pi[i] \models_{\mathrm{CTL}} \phi_{1}) \\
\end{array}
$$
Other useful temporal operators like $\mathbf{EF} \phi$ ($\phi$ holds potentially),  $\mathbf{AF} \phi$ ($\phi$ is inevitable), $\mathbf{EG} \phi$ (potentially always $\phi$) and $\mathbf{AG} \phi$ (invariantly $\phi$), are derived, as usual, as follows: $\mathbf{EF} \phi \equiv \mathbf{E}[\mathit{true} \mathbf{U} \phi]$, $\mathbf{AF} \phi \equiv \mathbf{A}[\mathit{true} \mathbf{U} \phi]$, $\mathbf{EG} \phi \equiv \neg \mathbf{AF} \neg \phi$ and $\mathbf{AG} \phi \equiv \neg \mathbf{EF} \neg \phi$.

In the following we provide a Kripke structure derived from the flat semantics $\mathcal{F}(S[B])$ and two CTL formulae characterising weak and strong adaptability. 

\begin{defn}[Associated Kripke structure]
Consider an $S[B]$ system and its associated flat semantics $\mathcal{F}(S[B]) =(F,f_0,\goes{r} \cup \goes{r,\psi,r'})$. Its \emph{associated Kripke structure} is defined as $\mathcal{K}(S[B]) =(T,t_{0},\goes{}_{k},I)$ where $T = F$, $t_{0} = f_{0}$, $\goes{}_{k} = \goes{r} \cup \goes{r,\psi,r'} \cup \; ( \goes{}_{\mathit{self}} \; \triangleq \{ (t,t) \mid t \not \! \goes{r} \wedge \; t \;\;\;\;\;  \not \!\!\!\!\!\!\!\!\!\! \goes{r_{s}, \psi, r'} \})$, $I$ is defined w.r.t.\ the set $AP = \{ adapting, steady, progress\}$ of atomic propositions as follows. For all $t \in T$:
\begin{description}
  \item[$(i)$] $adapting \in I(t) \iff t = (q,r,\rho) \wedge t \goes{r, \psi, r'}$ 
  \item[$(ii)$]  $steady \in I(t) \iff t = (q,r,\emptyset) \wedge (t \goes{r} \vee \; t \goes{r, \psi, r'})$
  \item[$(iii)$] $progress \in I(t) \iff  t \goes{r} \vee \; t \goes{r, \psi, r'}$
\end{description}
\end{defn}

Note that the only structural difference between $\mathcal{F}(S[B])$ and $\mathcal{K}(S[B])$ are the self-loop transitions in $\goes{}_{\mathit{self}}$ added in $\mathcal{K}(S[B])$. These are needed because the transition relation of the Kripke structure must be left-total, but indeed they allow us to keep the information that the states $t$ such that $t \goes{}_{\mathit{self}} t$ where originally deadlocked or bad terminated in $\mathcal{F}(S[B])$ (see discussion in Section~\ref{subsect:term}). Then, the atomic proposition $progress$, by its definition (identical to the one given for $\textsc{Progress}(q,r)$ at the end of Section~\ref{subsect:term}), is \emph{not} true in all and only those states of $\mathcal{K}(S[B])$ that were originally deadlocked or bad terminated in $\mathcal{F}(S[B])$. Moreover, we remark that in some states both $adapting$ and $steady$ propositions may hold at the same time. These are the already mentioned ``border'' states, i.e.\ those that are still in a steady situation, but will start adapting in the next transition. From these, the next state may be $steady$ again (immediate adaptation) or only $adapting$ (adaptation in more than one step).

The formulae that we will check on $\mathcal{K}(S[B])$ are the following:
\begin{itemize}
\item \textbf{Weak adaptation:} there is a path in which the progress condition continuously holds and, as soon as adaptation starts, there exists at least one path for which the system eventually ends the adaptation phase leading to a steady state.
\begin{equation}\label{eq:weak}
\mathbf{EG}((adapting \implies \mathbf{EF} \ steady) \; \wedge \; progress)
\end{equation}
\item \textbf{Strong adaptation:} for all paths, the progress condition always holds and  whenever the system is in an adapting state, from there all paths eventually ends the adaptation phase leading to a steady state.
\begin{equation}\label{eq:strong} 
\mathbf{AG}(( adapting \implies \mathbf{AF} \ steady) \; \wedge \; progress)
\end{equation}
\end{itemize}

We remark that the same formulae could be expressed in the Action-based Computation Tree Logic (ACTL) \cite{de1990action} without the need of defining the atomic propositions. However, we decided to use CTL because we recognise that it is one of the mostly known and used temporal logic for model checking, 

Moreover, we want to state that the expressive power given by CTL is adequate for the adaptability checking we introduce, in the sense that the same properties could not be expressed in the other mostly used logic, Linear Temporal Logic (LTL) \cite{pnueli77}. In particular, the weak adaptability property requires the existential path quantification ($\mathbf{EG}$, $\mathbf{EF}$), which cannot be expressed in LTL, to render the invariability of the possibility of adaptation along \emph{one} certain computation. Differently, the strong adaptability property could also be formulated in LTL as $\Box ((adapting \Rightarrow \Diamond steady) \wedge progress)$.

\begin{thm}[Weak adaptability checking]\label{teo:weak}\ \\
Consider an $S[B]$ system. Given a $B$ state $q$ and an $S$ state $r$ such that $ q \models L(r)$, then $q$ is weak adaptable to $r$ if and only if the weak adaptation CTL formula (equation~\ref{eq:weak}) is true in $\mathcal{K}(S[B])$ at state $(q,r,\emptyset)$. 
Formally, given a state $q \in [[L(r)]]$ 
$$
q \ |_w \ r 
\iff 
(q,r,\emptyset)\models_{\mathrm{CTL}} \mathbf{EG}((adapting \implies \mathbf{EF} \ steady) \; \wedge \; progress)
$$ 
\end{thm}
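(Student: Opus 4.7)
The plan is to prove the two directions of the biconditional separately, using Lemma~\ref{lem:propagation_weak} for the forward implication and a greatest-fixpoint construction for the reverse. Throughout, I write $\phi_w$ for the CTL formula on the right-hand side and $P$ for its inner conjunct $(adapting \Rightarrow \mathbf{EF}\ steady) \wedge progress$.

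For $(\Rightarrow)$, assuming $q |_{w} r$, there exists a weak adaptation $\mathcal{R}$ containing $(q,r)$. Invoking Lemma~\ref{lem:propagation_weak} yields an infinite path $\pi$ from $(q,r,\emptyset)$ whose steady anchors $(q_i,r_i,\emptyset)$ all satisfy $q_i |_{w} r_i$ and whose intermediate blocks are finite adaptation paths. Reading $\pi$ inside $\mathcal{K}(S[B])$ (where the added self-loops are irrelevant because $\pi$ uses only genuine transitions), I would check that $P$ holds at every position of $\pi$: $progress$ is immediate since each state fires a non-self transition along $\pi$, and at any adapting position $\mathbf{EF}\ steady$ is witnessed by continuing along $\pi$ until the next steady anchor, which is only finitely many adaptation steps ahead. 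Therefore $\pi$ is an EG-witness for $\phi_w$ at $(q,r,\emptyset)$.

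For $(\Leftarrow)$, I would take the candidate relation
\[
\mathcal{R} = \{(q',r') \in Q \times R \mid q' \models L(r') \wedge (q',r',\emptyset) \models_{\mathrm{CTL}} \phi_w\}
\]
and verify that it is a weak adaptation. Clause $(i)$ follows from the construction and the $progress$ conjunct. For clause $(ii)$, if $(q',r',\emptyset) \goes{r'}$, Proposition~\ref{prop:semantics}(ii) rules out any adapting transition, so any EG-witness path from $(q',r',\emptyset)$ must step via $\goes{r'}$ to some $(q'',r',\emptyset)$; the path suffix shows $(q'',r',\emptyset) \models_{\mathrm{CTL}} \phi_w$, placing $(q'',r')$ in $\mathcal{R}$. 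For clause $(iii)$, if $(q',r',\emptyset) \goes{r',\psi,r''}$, I would unfold the greatest-fixpoint characterization of $\mathbf{EG}$ to obtain an EG-witness path $\pi^*$ all of whose states lie in $[[\phi_w]]$; by Proposition~\ref{prop:semantics}(iii) its first step is a $\goes{r',\psi,r''}$ transition, and if $\pi^*$ reaches a steady state $(q''',r'',\emptyset)$ then that state is on $\pi^*$, hence automatically in $[[\phi_w]]$, yielding the required adaptation path.

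The main obstacle lies in the subcase of clause $(iii)$ in which $\pi^*$ never reaches a steady state but, by finiteness of $F$, eventually cycles through adapting states. Here I would exploit the mutual exclusivity of rules \textsc{Adapt} and \textsc{AdaptEnd}, which forces every adapting state to be in either pure continue mode or pure end mode, together with the conjunct $\mathbf{EF}\ steady$ of $P$ holding at each cycle state, to extract a concrete finite adaptation path exiting the cycle into a steady state $(q''',r'',\emptyset)$. Showing that this $(q''',r'',\emptyset)$ itself lies in $[[\phi_w]]$ is the delicate step: the idea is to reroute the infinite $P$-path witnessing EG through this exit, using $progress$ to propagate afterwards, so that $(q''',r'',\emptyset)$ inherits its own EG-witness and thus belongs to $\mathcal{R}$.
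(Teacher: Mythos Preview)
Your forward direction via Lemma~\ref{lem:propagation_weak} coincides with the paper. For the reverse direction you take a larger candidate relation than the paper does: the paper fixes one $\mathbf{EG}$-witness $\pi$ and sets $\mathcal{R}_\pi=\{(q_i,r_i)\mid \exists i\geq 0.\ \pi[i]=(q_i,r_i,\emptyset)\}$, so that in clause~(iii) the required steady successor is simply the next steady anchor \emph{on} $\pi$, which lies in $\mathcal{R}_\pi$ by construction. Your relation $\mathcal{R}=\{(q',r')\mid (q',r',\emptyset)\models_{\mathrm{CTL}}\phi_w\}$ instead forces you to show that whatever steady state you exit to satisfies $\phi_w$ in its own right.

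Your ``rerouting'' sketch does not establish this, and in fact cannot in the situation you yourself flag. Suppose the $\mathbf{EG}$-witness $\pi^*$ from $(q',r',\emptyset)$ enters, after one step, a cycle of adapting states $a_1\leftrightarrow a_2$ (Rule \textsc{Adapt}), each of which satisfies $\mathbf{EF}\,steady$ only via a side exit $a_1\to b\to s$ with $s=(q''',r'',\emptyset)$ steady; and suppose every continuation from $s$ reaches a deadlock (say $s\goes{r''}s'$ with $s'$ stuck in $\mathcal{F}(S[B])$). Then $\pi^*$ is a perfectly good witness of $\phi_w$ at $(q',r',\emptyset)$, yet the only steady state reachable by any finite adaptation path is $s$, and $s\not\models_{\mathrm{CTL}}\phi_w$ because $progress$ eventually fails from $s$. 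Appealing to ``$progress$ to propagate afterwards'' cannot manufacture an $\mathbf{EG}$-witness from $s$ when none exists, so $(q''',r'')\notin\mathcal{R}$ and clause~(iii) is not verified. The paper avoids this by never leaving $\pi$: it argues that the witness $\pi$ itself must reach its next steady anchor, since an infinite continuation through states that are $adapting$ but not $steady$ would (it claims) contradict $\pi$ being a witness. To repair your argument you would need the analogous step---that \emph{some} $\mathbf{EG}$-witness from $(q',r',\emptyset)$ actually reaches a steady state in finitely many steps---rather than branching off $\pi^*$ to an arbitrary exit.
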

\begin{proof}
See~\ref{proof:thm1}
\end{proof}

\begin{cor}\ \\
Consider an $S[B]$ system. Then, $S[B]$ is weak adaptable if and only if 
$$
t_{0}  \models_{\mathrm{CTL}} \mathbf{EG}((adapting \implies \mathbf{EF} \ steady) \; \wedge \; progress)
$$ 
where $t_0$ is  the initial state of $\mathcal{K}(S[B])$.
 \end{cor}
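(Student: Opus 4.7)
The plan is to derive this corollary as an immediate specialization of Theorem~\ref{teo:weak} applied to the initial states $q_0$ and $r_0$ of the $S[B]$ system. No new machinery should be needed; the whole task is to verify that the hypotheses of Theorem~\ref{teo:weak} are met at the initial configuration and that the state at which the CTL formula is evaluated matches the initial state of $\mathcal{K}(S[B])$.

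First I would unfold the definition of system-level weak adaptability (Definition~\ref{def:weakadapt}): an $S[B]$ system is weak adaptable precisely when $q_0 \ |_w\ r_0$. The hypothesis of Theorem~\ref{teo:weak} requires $q_0 \in [[L(r_0)]]$, and this is guaranteed by Definition~\ref{def:sb}, which stipulates that the initial $B$ state of any $S[B]$ system satisfies the constraints of the initial $S$ state, i.e.\ $q_0 \models L(r_0)$. Hence Theorem~\ref{teo:weak} applies with $q := q_0$ and $r := r_0$, yielding
\[
q_0 \ |_w \ r_0 \iff (q_0,r_0,\emptyset) \models_{\mathrm{CTL}} \mathbf{EG}((adapting \implies \mathbf{EF} \ steady) \wedge progress).
\]

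Next I would align the evaluation state with $t_0$. By construction of the flat semantics the initial flat state is $f_0 = (q_0, r_0, \emptyset)$, and by the definition of the associated Kripke structure $\mathcal{K}(S[B])$ we have $t_0 = f_0$. Substituting $(q_0, r_0, \emptyset) = t_0$ into the right-hand side of the equivalence above gives exactly the claimed characterisation, which closes the argument.

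I do not expect any real obstacle here: the corollary is essentially a notational lift of Theorem~\ref{teo:weak} from a generic pair $(q,r)$ with $q \models L(r)$ to the specific initial pair $(q_0,r_0)$, whose compatibility with $L$ is built into Definition~\ref{def:sb}. The only small care needed is to cite Definition~\ref{def:sb} explicitly so that the reader sees why Theorem~\ref{teo:weak} is applicable without any side condition left to check.
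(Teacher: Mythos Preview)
Your proposal is correct and follows essentially the same approach as the paper, which simply states that the thesis follows easily from Definition~\ref{def:weakadapt} and Theorem~\ref{teo:weak}. Your version is more explicit in invoking Definition~\ref{def:sb} to justify $q_0 \models L(r_0)$ and in identifying $t_0 = f_0 = (q_0,r_0,\emptyset)$, but the underlying argument is identical.
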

 \begin{proof}
 The thesis follows easily from Definition~\ref{def:weakadapt} and from Theorem~\ref{teo:weak}.
 \end{proof}

\begin{thm}[Strong adaptability checking]\label{teo:strong}\ \\
Consider an $S[B]$ system. Given a $B$ state $q$ and an $S$ state $r$ such that $ q \models L(r)$, then $q$ is strong adaptable to $r$ if and only if the strong adaptation CTL formula (equation~\ref{eq:strong}) is true in $\mathcal{K}(S[B])$ at state $(q,r,\emptyset)$. 
Formally, given a state $q \in [[L(r)]]$ 
$$
q \ |_w \ r 
\iff (q,r,\emptyset) \models_{\mathrm{CTL}} \mathbf{AG}(( adapting \implies \mathbf{AF} \ steady) \; \wedge \; progress)
$$ 
\end{thm}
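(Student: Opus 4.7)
The plan is to mirror the proof of Theorem~\ref{teo:weak} for weak adaptability, replacing existential path quantifiers with universal ones and invoking the strong-adaptation counterparts Lemma~\ref{lem:propagation} and Proposition~\ref{prop:prop_adapt}. A useful preliminary observation throughout both directions is that, once $\mathbf{AG}\ progress$ is established at $(q,r,\emptyset)$, no reachable state in $\mathcal{K}(S[B])$ carries a spurious self-loop from $\goes{}_{\mathit{self}}$: these were added only at states failing $progress$, so paths of $\mathcal{K}(S[B])$ starting in $(q,r,\emptyset)$ and restricted to the $progress$-reachable set coincide with paths in $\mathcal{F}(S[B])$.

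($\Rightarrow$) Assume $q \ |_s \ r$. I would verify both conjuncts of the $\mathbf{AG}$ formula at every reachable state. For $progress$, Lemma~\ref{lem:propagation} yields $q' \ |_s \ r'$ at every reachable steady $(q',r',\emptyset)$, hence $\textsc{Progress}(q',r')$ by clause (i) of Definition~\ref{def:strongadapt}; for adapting intermediate states, clause (iii) of Definition~\ref{def:strongadapt} forces every adaptation path to terminate in a steady state, so each intermediate state admits a further $\goes{r,\psi,r'}$ transition and cannot be a flat deadlock. For $adapting \implies \mathbf{AF}\ steady$, any reachable state $t$ with $adapting \in I(t)$ lies on an adaptation path beginning at some strong adaptable steady state, and clause (iii) again guarantees that every continuation of this path terminates in a steady state, which delivers $\mathbf{AF}\ steady$ at $t$.

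($\Leftarrow$) Assume the CTL formula holds at $(q,r,\emptyset)$. Following Proposition~\ref{prop:prop_adapt}, I would take as candidate
\[
\mathcal{R} = \{(q',r') \in Q \times R \mid (q',r',\emptyset) \in Post^{*}((q,r,\emptyset))\}
\]
and show it is a strong adaptation relation containing $(q,r)$. Clause (i) splits into $q' \models L(r')$, which follows from Proposition~\ref{prop:semantics}(vii) together with the assumption $q \models L(r)$, and $\textsc{Progress}(q',r')$, which follows from $progress \in I((q',r',\emptyset))$. Clause (ii) is immediate: any steady $r'$-successor remains in $Post^{*}((q,r,\emptyset))$. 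For clause (iii), consider an adaptation path starting at $(q',r',\emptyset)$; its first step produces a state where $adapting$ holds, so $\mathbf{AF}\ steady$ forces every continuation to reach a $steady$ state, and by Proposition~\ref{prop:semantics}(iii--v) the only way to leave adaptation is via $\textsc{AdaptEnd}$ or $\textsc{AdaptStartEnd}$, landing in some $(q'',r'',\emptyset) \in Post^{*}((q,r,\emptyset))$, whence $(q'',r'') \in \mathcal{R}$.

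The main obstacle is the subtle interplay between the added self-loops in $\mathcal{K}(S[B])$ and the $\mathbf{AF}$ operator: one must argue that $\mathbf{AF}\ steady$ is not vacuously witnessed by a Kripke path that stutters at a flat-deadlocked state, and dually that genuinely infinite adaptation paths in $\mathcal{F}(S[B])$ falsify $\mathbf{AF}\ steady$ as expected. The $progress$ conjunct inside the outer $\mathbf{AG}$ is precisely what cleans this up, since it excludes all flat-deadlocked states from the reachable part of $\mathcal{K}(S[B])$; the bookkeeping needed to transfer the AF-termination argument faithfully between the two transition systems is the technical crux of the proof.
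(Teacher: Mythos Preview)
Your proposal is correct and follows essentially the same route as the paper: both directions mirror the weak-adaptability proof with universal in place of existential quantification, invoking Lemma~\ref{lem:propagation} for the forward direction and taking as candidate strong adaptation relation the set of all pairs $(q',r')$ with $(q',r',\emptyset)$ reachable from $(q,r,\emptyset)$ (the paper writes this via paths rather than $Post^{*}$, but it is the same set). Your explicit treatment of the self-loop and $progress$ bookkeeping is more detailed than the paper's, which simply defers to the argument of Theorem~\ref{teo:weak}.
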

\begin{proof}
See~\ref{proof:thm2}
\end{proof}

\begin{cor}\ \\
Consider an $S[B]$ system. Then, $S[B]$ is strong adaptable if and only if 
$$
t_{0}  \models_{\mathrm{CTL}} \mathbf{AG}(( adapting \implies \mathbf{AF} \ steady) \; \wedge \; progress)
$$ 
where $t_0$ is  the initial state of $\mathcal{K}(S[B])$.
 \end{cor}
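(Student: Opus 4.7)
The plan is to prove the biconditional by two directions, closely mirroring the structure of the proof of Theorem~\ref{teo:weak} but replacing existential witnesses with universal quantification over paths, and by exploiting Lemma~\ref{lem:propagation} and the construction in Proposition~\ref{prop:prop_adapt}.

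For the forward direction ($\Rightarrow$), assume $q \mid_s r$. By Lemma~\ref{lem:propagation}, every reachable steady state $(q',r',\emptyset) \in Post^*((q,r,\emptyset))$ satisfies $q' \mid_s r'$. I would then verify the CTL formula at every state $t$ reachable from $(q,r,\emptyset)$ in $\mathcal{K}(S[B])$ by a case split on whether $t$ is steady, a border state, or an intermediate adapting state. If $t=(q',r',\emptyset)$ is steady, clause (i) of strong adaptation yields $\textsc{Progress}(q',r')$, so $progress$ holds; if additionally $t \goes{r',\psi,r''}$ (so $adapting$ is true), then by Proposition~\ref{prop:semantics}(ii) all outgoing paths are adaptation paths, and by clause (iii) each one is finite and terminates in a steady state, so $\mathbf{AF}\,steady$ holds. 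For an intermediate adapting state $(q_i,r',\{(\psi,r'')\})$, it lies on some adaptation path from a reachable border state; every continuation of this path is also an adaptation path covered by (iii), hence it eventually reaches a steady state, and $progress$ holds because the state must be able to continue the adaptation (otherwise (iii) would be violated).

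For the backward direction ($\Leftarrow$), assume $q \models L(r)$ and that the CTL formula holds at $(q,r,\emptyset)$. Following the pattern of Proposition~\ref{prop:prop_adapt}, I would define the candidate relation
$$\mathcal{R} = \{(q',r') \in Q \times R \mid (q',r',\emptyset) \in Post^*((q,r,\emptyset))\}$$
and verify it is a strong adaptation relation containing $(q,r)$. Clause (i) is immediate: $\mathbf{AG}\,progress$ gives $\textsc{Progress}(q',r')$ at every reachable steady state, and Proposition~\ref{prop:semantics}(vii) (applied from $(q,r,\emptyset)$ where $q \models L(r)$) ensures $q' \models L(r')$. Clause (ii) follows because reachability from $(q,r,\emptyset)$ is preserved under $\goes{r'}$.

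The main obstacle is clause (iii), which requires that every adaptation path from a reachable steady state be finite and terminate at a steady state in $\mathcal{R}$. I would argue by contradiction, using the two conjuncts of the CTL formula to exclude the two pathological cases. If some adaptation path $(q',r',\emptyset)\goes{r',\psi,r''}(q_1,r',\{(\psi,r'')\})\goes{r',\psi,r''}\cdots$ were infinite, then each intermediate state would be reachable from $(q,r,\emptyset)$, have $adapting$ true (since the path continues) and $steady$ false (non-empty $\rho$); following the tail from $(q_1,r',\{(\psi,r'')\})$ would give a path never visiting a steady state, falsifying $\mathbf{AF}\,steady$ and contradicting $\mathbf{AG}(adapting\Rightarrow\mathbf{AF}\,steady)$. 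If instead an adaptation path ended at a state with non-empty $\rho$ without reaching the target $r''$, then by Proposition~\ref{prop:semantics}(iii) this state has no $\goes{r'}$ transition either, so $progress$ fails, contradicting $\mathbf{AG}\,progress$. Hence every adaptation path must terminate via \textsc{AdaptEnd} or \textsc{AdaptStartEnd} in a steady state $(q'',r'',\emptyset)$, which is reachable from $(q,r,\emptyset)$ by construction and therefore belongs to $\mathcal{R}$. This closes clause (iii) and completes the proof.
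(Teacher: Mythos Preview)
Your argument is correct, but it does far more work than the statement requires. The paper's proof of this corollary is a single line: by Definition~\ref{def:strongadapt}, $S[B]$ is strong adaptable iff $q_{0}\mid_{s} r_{0}$; since $q_{0}\models L(r_{0})$ holds by Definition~\ref{def:sb} and $t_{0}=(q_{0},r_{0},\emptyset)$, Theorem~\ref{teo:strong} gives the equivalence with the CTL formula immediately.

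What you have written is, in effect, a self-contained proof of Theorem~\ref{teo:strong} (instantiated at the initial pair, though you phrase it for a generic $(q,r)$). Your forward direction via Lemma~\ref{lem:propagation} and the case split on steady/border/intermediate states, and your backward direction via the candidate relation $\mathcal{R}=\{(q',r')\mid (q',r',\emptyset)\in Post^{*}((q,r,\emptyset))\}$ with the contradiction argument on infinite or stuck adaptation paths, are exactly the ingredients the paper uses to prove Theorem~\ref{teo:strong} (see~\ref{proof:thm2}, which in turn points back to the structure of~\ref{proof:thm1} and Lemma~\ref{lem:propagation}). So nothing is wrong, but once Theorem~\ref{teo:strong} is in hand the corollary needs only the observation that the system-level notion of strong adaptability is defined as strong adaptability of the initial pair.
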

 \begin{proof}
 As in the weak case, the thesis follows easily from Definition~\ref{def:strongadapt} and from Theorem~\ref{teo:strong}.
 \end{proof}
 
Note that since we assume that the behavioural and the structural state machines are finite state, then the CTL adaptability properties can be model checked. This means that the defined notions of weak and strong adaptability are decidable and that the problem of adaptability checking can be reduced to a classical CTL model checking problem.

\subsection{State Space Dimension}

CTL model checking has been widely investigated in the literature and relies on efficient tools like NuSMV~\cite{cimatti2002nusmv}. The time computational complexity of the model checking problem for CTL is $\mathit{O}((n + m) \cdot |\psi|)$, where $n$ is the number of states in the Kripke structure, $m$ is the number of transitions and $|\psi|$ is the length of the formula, i.e.\ the number of operators in its parse tree. 

The well-known problem in this area is the so-called state explosion problem, that is to say the high number of states (and thus transitions) that comes out from even relatively short descriptions of systems composed of concurrent interactive components. It is not in the scope of this work to discuss and refer the high research efforts that are currently going on in this area. A good starting point can be found in \cite{katoen2008}. We will just give a brief estimation of the dimension of the state space given a certain $S[B]$ system, which is the dominant complexity factor, considering that our formulae for the adaptability checking have constant length 4. The computational complexity of adaptability checking is therefore $\mathit{O}(n + m)$.

Note that in our context the usual sources of state explosion (components, concurrency) are ``hidden'' inside the behavioural level $B$. This is because, as we pointed out in Section~\ref{sec:introadapt}, we want to work on the very basic model of computation of finite state machines and maintain a black-box view of the behavioural level from the structural level point of view. Thus, we take as the dominant dimension of the problem the ``already exploded'' number of states of the behavioural level. Then, we estimate what the definition of an $S[B]$ model adds up to this explosion. 

Recall that the Kripke structure to model check is $\mathcal{K}(S[B])$, derived from the flat semantics $\mathcal{F}(S[B])$. Depending on the dimension of $S$, the flat semantics could possibly lead to a transition system larger than the state space of the behavioural model $B$ since the states are formally tuples in $Q \times R \times (\{(\psi,r') \mid \exists r \in R . \ r \goes{\psi}_S r' \} \cup \{\emptyset \})$. The dimension $n$ of the state space is $\mathit{O}(|Q| \cdot 2^{|Q|} \cdot 2^{|Q|})$, where $|\cdot|$ is set cardinality, based only on the number of states in the behavioural level and due to the higher order nature of $S$.
However, considering the intended role of the structural level $S$, the number of $S$ states is never exponential. This is because the different $S$ states represents different modes of operations and, thus, usually stand for disjoint sets of $B$ states. For this reason, a more realistic estimation of the state space dimension $n$ should be expressed w.r.t.\ both $B$ and $S$ dimensions, yielding a number that is $\mathit{O}(|Q| \cdot |R| \cdot |\goes{}_{S}|)$.

\section{Discussion and Conclusion}
\label{sect:conclusion}
In this work we presented a formal hierarchical model for
multi-level self-adaptive systems, consisting of a lower behavioural level and an upper structural level. The $B$ level is a state machine describing the behaviour of the system and the $S$ level is a second-order state
machine accounting for the constraints which the system has to
comply with. $S$ states identify stable regions that the $B$ level may
reach by performing adaptation paths. 

The adaptation semantics of the
multi-level system is given by a flattened transition system that implements a top-down and behavioural adaptation model: adaptation starts whenever the current $B$ state does not meet the constraints specified by the current $S$ state. Then, adaptation towards a target $S$ state $r'$ ends successfully when the system ends up in a different $B$ state $q'$ such that $q'$ satisfies the constraints in $r'$.

We tackled the adaptability checking problem by firstly characterizing two degrees of adaptability: \textit{weak adaptability}, for verifying if the system is able to adapt successfully for some adaptation paths; and \textit{strong adaptability}, for verifying if the system is able to adapt successfully for all possible adaptation paths. Then, we defined weak and strong adaptation as relations over the set of $B$ states and the set of $S$ states, so that adaptability is verified when an appropriate adaptation relation can be built. We also provided a logical formulation of weak and strong adaptability, in terms of CTL formulae. Finally, by proving that the logical characterization is formally equivalent to the relational one, we demonstrated that the adaptability checking problem can be reduced to a classical model checking problem. We derived the computational complexity of the problem and showed that the state space dimension is polynomial in the dimension of the original $S[B]$ system.

The approach has been elucidated through an example of self-adaptive software systems: the motion controller of an Autonomous Transport Vehicle. We considered two structural levels: $S_0$, which supports both a normal operation mode and a fallback mode to which the system adapts in case of traffic congestion; and $S_1$, which only supports the normal operation mode. Keeping the behavioural model $B$ fixed, we derived the flat semantics of $S_0[B]$ and $S_1[B]$ and we compared their adaptation capabilities, showing that the former is strong adaptable, while the latter is only weak adaptable.

We report that this work gives a formal computational characterization
of self-adaptive systems, and a novel and well-grounded formulation of the concept of adaptability. We elaborate effective formal methods to investigate and solve the problem of adaptability checking. Provided that $S[B]$ systems are based on a general and essential model of computation (state machines), our results are general too and can be easily declined into richer and more expressive models.

\subsection{Multiple Levels and Modular Adaptability Checking}
Although we have investigated the relationships between two fundamental levels, it is possible to show how our model can easily scale-up to an arbitrary number of levels, arising from the composition of multiple $S[B]$ systems. We give an intuition of how a higher order $S[B]$ can be defined. In these settings, a first-order $S[B]$ systems is a ``classical'' system, as defined in Sect.~\ref{sect:model}. For $n>1$, a $n^{th}$-order $S[B]$ system is an $S[B]$ system $S^n[B^n]$, where $S^n = (R^n, r_0^n, A^n, \mathcal{O}^n, \goes{}^n_S, L^n)$ is the structural level; and $B^n = \|_{i \in I} \ \mathcal{F}(S^{n-1}[B^{n-1}]_i)$ is the behavioural level resulting from the application of a parallel composition operation `$\|$' to the flattened semantics of a family of $n-1^{th}$-order $S[B]$ systems, indexed by $i \in I$.

Further, due to the separation between the $S$ and the $B$ levels, modular techniques for adaptability checking could be exploited in our model. Let $S_1=(R_1, r_{1_0}, A_1, \mathcal{O}_1, \goes{}_{S_1}, L_1)$ and $S_2=(R_2, r_{2_0}, A_2, \mathcal{O}_2, \goes{}_{S_2}, L_2)$ be two $S$ levels. For instance, we would be interested in showing if the adaptation capabilities of $S_2$ are preserved by $S_1$, in the case that $S_1$ refines $S_2$, or $S_1\preceq S_2$. 
To our purposes we may assume that $S_1\preceq S_2$ iff a suitable simulation relation $\mathcal{R} \in R_1 \times R_2$ exists.

The following result would come quite straightforwardly: if $S_1\preceq S_2$, it can be shown that for every $B$ level, if $S_2[B]$ is strong adaptable, then $S_1[B]$ is strong adaptable too, i.e.\ that refinements at the $S$ level would preserve strong adaptability. On the other hand, refinements do not necessarily preserve weak adaptability when $S_2[B]$ is weak adaptable but not strong adaptable. Instead, we cannot make any assumption on the adaptability of $S_2$ based just on the adaptability of its refinement $S_1$.

Modular adaptability could be investigated also in the opposite case, i.e.\ making the $S$ level vary and considering two behavioural levels $B_1$ and $B_2$ such that $B_1\preceq B_2$. Intuitively, it can be demonstrated that in this case abstractions at the $B$ level preserve weak adaptation, or alternatively that if $S[B_1]$ is weak adaptable, then $S[B_2]$ is weak adaptable too for each structural level $S$.

We leave the two topics briefly introduced above as future work, where also other features of the $S[B]$ model can be developed.

\subsection{Related Work}
\smallskip\noindent\textit{Behavioural Adaptation}. Several efforts have been made in the formal modelling of self-adaptive
software. Zhang et al.\ give a general state-based model of
self-adaptive programs, where the adaptation process is seen as a transition
between different non-adaptive regions in the state space of the
program~\cite{zhang2006model}. In order to verify the correctness of adaptation
they define a logic called A-LTL (an adapt-operator extension to LTL) and
model-checking algorithms~\cite{zhang2009modular} for verifying adaptation
requirements. Similarly, in $S[B]$ systems adaptation can be seen as a transition in the $S$ level between two steady regions (the $S$ states), which corresponds to performing a path at the $B$ level. However, in our model the
steady-state regions are represented in a more declarative way using
constraints associated to the states of the $S$ level. 
Adaptation of the $B$ level is not
necessarily instantaneous and during this phase the system is left unconstrained
but an invariant condition that is required to be met during adaptation.
Differently to~\cite{zhang2006model}, the invariants are specific for every
adaptation transition making this process controllable in a finer way.

PobSAM~\cite{khakpour2011formal,khakpour2010formal} is another formal model for self-adaptive systems, where actors expressed in Rebeca are governed by managers that enforce dynamic policies
(described in an algebraic language) according to which actors adapt their
behaviour. Different adaptation modes allow to handle events occurring during
adaptation and ensuring that managers switch to a new configuration only once
the system reaches a safe state. Similarly to our structural and behavioural levels, the structure of a PobSAM model is built on multiple levels: the level of managed actors, the levels of autonomous managers and a view level, which acts as a sort of observation function over the state variables of the actors. Further, a recently published extension of PobSAM called HPobSAM~\cite{khakpour2012hpobsam}, enables the hierarchical refinement of managed components. On the other hand, $S[B]$ systems are based on the general formalism of state machines and enjoy the property that higher levels lead to higher-order structures: indeed, the $S$ level can be interpreted as a second-order $B$ level, since an $S$ state identifies a set of stable $B$ states and firing an $S$ transition means performing an adaptation path, i.e.\ a sequence of $B$ transitions.

In the position paper by Bruni et
al.~\cite{bruni2012conceptual}, adaptation is defined as the run-time
modification of the control data of a system and this approach is instantiated into a formal model based on labelled transition systems. They consider a system $S$ that is embedded in some environment $\mathcal{E}$ and that has to fulfil a goal $\psi$. When the environment and the goal are fixed, the system $S$ is such that:
$$\mathcal{E}[S] \models \psi,$$
where $\mathcal{E}[S]$ can be alternatively expressed as the parallel composition $\mathcal{E} \| S$. However, $S$ may operate under run-time modifications in the environment and goal. Thus, when the environment $\mathcal{E}$ changes into $\mathcal{E}'$ and the goal $\psi$ into $\psi'$, $S$ adapts itself into $S'$ such that:
$$\mathcal{E}'[S'] \models \psi'.$$
As shown in~\cite{uchitel2012}, the problem of finding such an $S'$ can be formulated as an LTS control problem.
Similarly, in our context we can see an $S[B]$ system as a behavioural model $B$ embedded in a structure $S$. Let $\bar{r}_S[\bar{q}_B]$ denote the current state of the $S[B]$ system. Then,
$$\langle S[B], \bar{r}_S[\bar{q}_B] \rangle \models \psi_{S[B]},$$
where $\psi_{S[B]} \equiv \bar{q}_B \in L(\bar{r}_S)$, i.e.\ the current $B$ state must meet the constraints imposed by the current $S$ state. Therefore, the structural level $S$ not only acts as the operating environment for $B$, but also encodes the goal $\psi$, which requires that $B$ has to move within the stable region identified by the constraints in the current $S$ state. We can imagine that whenever $\psi$ is no longer satisfied, adaptation produces a system $S'[B']$ from the current system $S[B]$, in a way that
$$\langle S'[B'], \bar{r}_{S'}[\bar{q}_{B'}] \rangle \models \psi_{S'[B']}.$$
In our settings, the control data component is not explicitly implemented and there are no transitions that can be directly controlled, but due to the top-down adaptation semantics, the $S$ level provides some kind of control mechanism on $B$. Indeed, during the steady phase, $S$ forbids any transition to a $B$ state that violates the current constraints if there is at least one transition to a state satisfying them. Instead, when adaptation starts, we can think that $S$ outputs to $B$ a list of target $S$ states, so directing the evolution of $B$ towards a set of possible goals and excluding those transitions that lead to states violating the adaptation invariant.

The adaptation as control data modification view of~\cite{bruni2012conceptual} is implemented also in the formalism of Adaptable Interface Automata (\textsc{aias})~\cite{bruni2013adaptable}, which extends Interface Automata~\cite{de2001interface} with state-labelling atomic propositions, a subset of which - the control propositions - models the control data. Adaptation occurs in correspondence of transitions that change the control propositions, and actions labelling such transitions are called control actions. Similarly to our $S[B]$ systems, an adaptation phase is thus a sequence of adaptation transitions. Let $\mathcal{A}^C, \mathcal{A}^I, \mathcal{A}^O$ denote the set of control actions and the sets of input and output actions of the underlying interface automaton, respectively. On top of these actions, the authors provides different characterizations of an \textsc{aia} $P$: $P$ is adaptable when $\mathcal{A}^C \neq \emptyset$; $P$ is controllable when $\mathcal{A}^C \cap \mathcal{A}^I \neq \emptyset$; and $P$ is self-adaptive when $\mathcal{A}^C \cap \mathcal{A}^O \neq \emptyset$. Given an \textsc{aia} $P$, adaptability properties are defined as those that are satisfied by $P$, but are not satisfied if control actions are removed, i.e.\ by the \textsc{aia} $P_{|\mathcal{A} \setminus \mathcal{A}^C }$. In addition, the authors show how our notions of weak and strong adaptability can be encoded in their framework. 


Theorem-proving techniques have also been used for assessing the
correctness of adaptation: in~\cite{kulkarni2004correctness} a proof lattice
called transitional invariant lattice is built to verify that an adaptive
program satisfies global invariants before and after adaptation. In particular
it is proved that if it is possible to build that lattice, then adaptation is
correct. Instead, in our model the notion of correctness of adaptation is formalized by means of the weak and strong adaptability relations, that can be alternatively expressed as CTL formulae, thus reducing the problem of adaptability checking to a classical model checking problem.

Furthermore, in~\cite{viroli2011spatial}, the authors define a spatial and chemical-inspired tuple-space model in the context of distributed pervasive services. They show that equipping the classical tuple-space model with reaction and diffusion rules makes possible to support features like adaptivity and competition among services, by implementing Lotka-Volterra-like rules. Similarly, $S[B]$ systems are inspired by complex natural systems, where the dichotomy between the behavioural level and the structural level may represent for instance the genotype and the phenotype level of an organism, respectively; or, by using the metaphor of multiscale systems, the $B$ level can be used to model the system at the micro-scale (e.g.\ cellular scale), while the $S$ level would represent the emergent macro-scale features (e.g.\ the tissue).
However, our model is not quantitative and is not based on a coordination model, but on a simple and general model of computation (state machines). Moreover, adaptation is not the result of nature-inspired rules, but is rigorously defined from the $B$  and the $S$ level by operational semantics rules.

\smallskip\noindent\textit{Structural Adaptation}. Our model currently supports only behavioural adaptation, but several approaches have been recently defined also in the context of structural adaptation, most of them relying on dynamic software architectures. In~\cite{bradbury2004survey}, several formal techniques for specifying self-managing architectures, i.e.\ able to support autonomous run-time architectural changes, are surveyed and compared.

An important line of research focuses on the application of graph-based methods, initiated by Le M{\'e}tayer's work~\cite{le1998describing} where architectural styles are captured by graph grammars and graph rewriting rules are used to enable architectural reconfiguration. Other relevant literature in this field includes the Architectural Design Rewriting (ADR) framework~\cite{bruni2008style}, in which an architectural style is described as an algebra over typed graphs (i.e.\ the architectures), whose operators correspond to term-rewriting rules. ADR supports the well-formed compositions of architectures, the hierarchical specification of styles, style checking and style-preserving reconfiguration. A method for selecting which rules to apply for maintaining a particular architectural style against unexpected run-time reconﬁgurations is proposed in~\cite{poyias2012}, where the authors extend ADR rules with pre- and post-conditions, representing invariants to be met by an architecture before and after the application of a rule. Tool support is discussed in~\cite{bruni2008graph}, where the authors provide an in-depth comparison between the implementations of typed graph grammars in Alloy~\cite{jackson2011software} and of ADR in Maude~\cite{clavel2002maude}. 

Moreover, in~\cite{becker2006symbolic}, an approach for verifying safety properties in structurally adapting multi-agent systems is presented. By modelling a system as a graph and its evolution by graph transformation rules, safety properties are verified by means of structural invariants, i.e.\ a set of forbidden graph patterns.

\smallskip\noindent\textit{Hierarchical and Multi-level Methods}. Besides $S[B]$ systems, multi-level approaches have been extensively used for the modelling of self-adaptive software systems. For instance,
in~\cite{corradini2006relating} Corradini et al.\ identify and formally relate
three different levels: the \textit{requirement level}, dealing with high-level
properties and goals; the \textit{architectural level}, focusing on the
component structure and interactions between components; and the
\textit{functional level}, accounting for the behaviour of a single component. 

Furthermore, Kramer and Magee~\cite{kramer2007self} define a three-level
architecture for self-managed systems consisting of a \textit{component control
level} that implements the functional behaviour of the system by means of
interconnected components; a \textit{change management level} responsible for
changing the lower component architecture according to the current status and
objectives; and a \textit{goal management level} that modifies the lower change
management plans according to high-level goals. 

Hierarchical finite state
machines, in particular Statecharts~\cite{harel1987statecharts} have also been employed to
describe the multiple architectural levels in self-adaptive software
systems~\cite{karsai2003approach,shin2005self}. 

Relevant applications of multi-level approaches include the work by Zhao et al.~\cite{zhao2011model}, where the authors present a two-level model for self-adaptive systems consisting of a functional behavioural level - accounting for the application logic and modelled as state machines - and an adaptation level, accounting for the adaptation logic and represented with a mode automata~\cite{maraninchi2003mode}. In this case, each mode in the adaptation level is associated with different functional state machines and adaptation is seen as a change of mode. Adaptation properties are described and checked by means of a mode-based extension of LTL called mLTL.

Another accepted fact is that higher levels in complex adaptive systems lead to
higher-order structures. Here the higher $S$ level is described by means of a
second order state machine (i.e.\ a state machine over the power set of the
$B$ states). Similar notions have been formalized by Baas~\cite{nils3baas} with
the \textit{hyperstructures} framework for multi-level and higher-order
dynamical systems; and by Ehresmann and Vanbremeersch with their \textit{memory
evolutive systems}~\cite{ehresmann2007memory}, a model for hierarchical
autonomous systems based on category theory.

There are several other works worth mentioning, but here we do not aim at
presenting an exhaustive state-of-the-art in this widening research field. We
address the interested reader to the
surveys~\cite{cheng2009software,salehie2009self} for a general introduction to
the essential aspects and challenges in the modelling of self-adaptive software
systems.

\section*{Acknowledgements}

The authors want to thank the anonymous reviewers for their valuable suggestions and prof.\ Mario Rasetti for the continuous inspiration and the useful discussions about the topics of this work and its general context. This work was partially supported by the project  ``TOPDRIM: \textit{Topology Driven Methods for Complex Systems}'' funded by the European Commission (FP7 ICT FET Proactive - Grant Agreement N.\ 318121).










\appendix
\section{Proofs}\label{appendix:proofs}
\subsection{Proposition 1 (Properties of flat semantics)}\label{proof:prop1}
\begin{proof}
~\begin{itemize}

\item[{\it (i + ii)}] Both the couples of rules \textsc{Steady + AdaptStart} and \textsc{Steady + AdaptStartEnd} ensure that there cannot exist a non-adapting state with both an outgoing non-adapting transition $\goes{r}$ and an outgoing adapting transition $\goes{r,\psi,r'}$. Indeed, the premises of the two rules, in both cases, are mutually exclusive by the fact that $(q \goes{}_{B} q' \wedge q' \models L(r))$ is the negation of $\forall q''.(q \goes{}_Bq'' \implies q'' \not\models L(r))$.

\item[{\it (iii)}] Rules \textsc{Adapt} and  \textsc{AdaptEnd} are the only ones producing an outgoing transition from an adapting state and none of them produces an $r$-labelled transition.

\item[{\it (iv)}] $(i) \wedge (ii) \wedge (iii) \implies (iv)$.

\item[{\it (v)}] Rule \textsc{Adapt} ensures that an adaptation transition is taken only if there are no other transitions that directly lead to the target $S$ state. Indeed \textsc{Adapt} and \textsc{AdaptEnd} are mutually exclusive, thus avoiding adaptation steps to be taken when adaptation can end. This also holds for the successful adaptation paths (of length 1) obtained with the rule \textsc{AdaptStartEnd}, whose premises are not compatible with those of rule \textsc{Adapt}.

\item[{\it (vi)}] 
Let $\pi$ be a generic path of $\mathcal{F}(S[B])$ starting at a state $(q,r,\emptyset)$. Let $i\geq 0$ be a generic position in $\pi$, denoted $\pi[i]$, such that $\pi[i] = (q_{i},r_{i},\emptyset)$. 
Then, by properties (i)-(iv), there are only the following cases:
\begin{enumerate}
  \item if $(q_{i},r_{i},\emptyset) \not \! \goes{r} \wedge \; (q_{i},r_{i},\emptyset) \;\;\;\;  \not \!\!\!\!\!\!\!\!\!\! \goes{r_{i}, \psi, r'}$ then the path stops at position $\pi[i] = (q_{i},r_{i},\emptyset)$;
\item if $(q_{i},r_{i},\emptyset) \goes{r}$, then by Rule \textsc{Steady} then the path will continue as $(q_{i},r_{i},\emptyset) \goes{r_{i}} (q_{i+1},r_{i+1},\emptyset)$. Thus, in this case $m_{i}=1$ and $n_{i} = 0$ and the path may continue after position $\pi[i+1]$;
\item if $(q_{i},r_{i},\emptyset) \goes{r_{i}, \psi, r'} (q_{i+1},r_{i+1},\emptyset)$ then Rule \textsc{AdaptStartEnd} has been applied and in this case $m_{i}=0$ and $n_{i} = 1$. The path may continue after position $\pi[i+1]$;
\item if $(q_{i},r_{i},\emptyset) (\goes{r_{i},\psi_{i},r_{i+1}})^{n_i}
 (q_{i+1},r_{i+1},\emptyset)$ then Rule \textsc{AdaptStart} has been applied, followed by zero or more applications of Rule \textsc{Adapt} and ended by the application of Rule \textsc{AdaptEnd}. In this case $m_{i}=0$ and $n_{i}  > 0$. The path may continue after position $\pi[i+1]$;
\item if $(q_{i},r_{i},\emptyset) (\goes{r_{i},\psi_{i},r_{i+1}})^{n_i}
 (q'',r_i,\rho) \goes{r_{i},\psi,r'} (q',r_{i}, \{(\psi, r')\})$ then Rule \textsc{AdaptStart} has been applied, followed by zero or more applications of Rule \textsc{Adapt} and, after $n_{i} \geq 0$ steps, the path has stopped because neither Rule \textsc{Adapt} nor Rule\textsc{AdaptEnd} could be applied. In this case the path stops and it does not reach a position $\pi[i+1]$ of the form $(q_{i+1},r_{i+1},\emptyset)$.
\end{enumerate}

If $\pi$ is finite then cases 3,4 or 5 occur for a certain number of steps, say $k-1 \geq 0$. At the $k$-th step, if the path stops because case 1 occurs, then it is of kind $(1)$. Otherwise, if it stops because case 5 occurs, then it is of kind $(2)$. If $\pi$ is infinite then it must be of kind $(1)$ because cases 1 and 5 can never occur.

\item[{\it (vii)}] 
By property $(vi)$ it follows that the positions $i$ in which $\pi[i] = (q_{i}, r_{i}, \emptyset)$ are those and only those in which the $S[B]$ system is in a steady state, i.e.\ it is either the first state $f_{0}$ in which by definition it holds $q_{0} \models L(r_{0})$, or it has been reached by using Rules \textsc{Steady}, \textsc{AdaptEnd} or \textsc{StartAdaptEnd}, which all explicitly check that $q_{i} \models L(r_{i})$. 
\end{itemize}
\end{proof}
\subsection{Proposition 2 (Union of Weak Adaptation Relations)}\label{proof:prop2}
\begin{proof}
If $(q,r) \in \mathcal{R}_{1} \cup \mathcal{R}_{2}$ then $(q,r) \in \mathcal{R}_{1}$ or $(q,r) \in \mathcal{R}_{2}$. Then it is possible to trivially verify all the conditions on $(q,r)$ of Definition~\ref{def:weakadapt} using the same proofs already available for $\mathcal{R}_{1}$ and $\mathcal{R}_{2}$, respectively, by substituting, in these proofs, $\mathcal{R}_{1}$ and $\mathcal{R}_{2}$ with $\mathcal{R}_{1} \cup \mathcal{R}_{2}$.
\end{proof}

\subsection{Lemma 1 (Propagation of Weak Adaptation Relation)}\label{proof:lemma1}
\begin{proof}
If the path exists then, by property {\it(vi)} of Proposition~1, its form is of the kind $(1)$, which equals to the one of the thesis. It remains to show the existence and that the weak adaptability propagates, i.e.\ $\forall i \geq 0 \; . \; q_{i} |_{w} r_{i}$. We use induction to construct the infinite path and to show the propagation.
When $i=0$, from $(q_{0} = q, r_{0} = r, \emptyset)$ we can construct the empty path and, by hypothesis, $q_{0} = q |_{w} r_{0} = r$. Moreover, the condition $\textsc{Progress}(q_{0}, r_{0})$ guarantees that the path can continue. At the generic step $i > 0$ suppose by induction that $q_{i} |_{w} r_{i}$. Thus, there exists a weak adaptation relation $\mathcal{R}_{i}$ containing $(q_{i},r_{i})$. In addition, the condition $\textsc{Progress}(q_{i}, r_{i})$ guarantees that the path can continue. This implies the existence of transition(s) $(q_i,r_i,\emptyset)
 (\goes{r_{i}})^{m_i}(\goes{r_{i},\psi_{i},r_{i+1}})^{n_i}
 (q_{i+1},r_{i+1},\emptyset)$. There are two cases:
\begin{itemize}
\item $m_{i} =1$. Thus $(q_i,r_i,\emptyset) \goes{r_{i}}  (q',r_{i+1} = r_{i}, \emptyset)$ for some $q' \in B$.  According to Definition~\ref{def:weakadapt} there is at least one $B$ state $q_{i+1}$ such that $\mathcal{R}_{i}$ contains $(q_{i+1},r_{i+1} = r_{i})$. Thus we can choose the transition with target $(q'=q_{i+1},r_{i+1} = r_{i}, \emptyset)$ and have $q_{i+1} |_{w} r_{i+1}$.
\item $m_{i} =0$. Thus $(q_i,r_i,\emptyset) \goes{r_{i}, \psi'_{i}, r'_{i+1}}$ for some $\psi'_{i}$ and $r'_{i+1}$. This is a transition leaving from a state $q_{i}$ that is weak adaptive to $r_{i}$. Thus, by the definition of weak adaptation, there exists a state $(q_{i+1},r_{i+1}, \emptyset)$ such that 
$(q_i,r_i,\emptyset) (\goes{r_{i}, \psi_{i}, r_{i+1}})^{n_{i}}  (q_{i+1},r_{i+1}, \emptyset)$ for some $n_{i} > 0$ and 
$(q_{i+1},r_{i+1}) \in \mathcal{R}_{i}$, i.e., $q_{i+1} |_{w} r_{i+1}$.
\end{itemize} 
\end{proof}
\subsection{Proposition 4 (Strong Adaptation implies Weak Adaptation)}\label{proof:prop4}
\begin{proof}
Since $q |_{s} r$ then there exists a strong adaptation relation $\mathcal{R}$ such that $(q,r) \in \mathcal{R}$. We construct a weak adaptation relation $\mathcal{R}'$ containing $(q,r)$, hence $q |_{w} r$. At the beginning we put $(q,r)$ in $\mathcal{R}'$, then for some transition $(q,r,\emptyset) \goes{r} (q',r,\emptyset)$ we add $(q,r')$, which belongs to $\mathcal{R}$, also to $\mathcal{R}'$. If no $\goes{r}$ transitions are possible then, if $(q,r,\emptyset) \goes{r,\psi,r'}$, since $q$ is strong adaptable to $r$, we select one of the surely existing successor states $(q',r',\emptyset)$ such that $(q',r') \in \mathcal{R}$ and we add $(q',r')$ to $\mathcal{R}'$. Then, we iterate this process for each new pair added in $\mathcal{R}'$. The process will terminate since the states are finite and the resulting $\mathcal{R}'$ will be, by construction, a weak adaptation relation.
\end{proof}

\subsection{Lemma 2 (Propagation of Strong Adaptation Relation)}\label{proof:lemma2}
\begin{proof}
Consider any generic path $\pi$ in $\mathcal{F}(S[B])$ starting from $(q,r,\emptyset)$. By Definition~\ref{def:strongadapt} the non-termination property $\textsc{Progress}(q_{i},r_{i})$ must hold for each state $(q_{i},r_{i},\emptyset)$, thus implying that the path is infinite. Thus, by property {\it(vi)} of Proposition~\ref{prop:semantics}, $\pi$ must be of the form:
$$
\begin{array}{l}
\pi = (q= q_0, r=r_0,\emptyset)
(\goes{r_{0}})^{m_0}(\goes{r_{0},\psi_{0},r_{1}})^{n_0} \cdots \\
\cdots (q_i,r_i,\emptyset)
 (\goes{r_{i}})^{m_i}(\goes{r_{i},\psi_{i},r_{i+1}})^{n_i}
 (q_{i+1},r_{i+1},\emptyset) \cdots 
\end{array}
$$
where for each $i$, either $m_i = 1 \wedge n_i = 0$ (steady transition) or $m_i = 0 \wedge n_i > 0$ (adaptation path). We prove inductively that $q_{i} |_{s} r_{i}$ for any $i \geq 0$. If $i=0$, then $q_{0}=q$ and $r_{0}=r$, thus the thesis is trivially true. If $i > 0$ suppose by induction that $q_{i} |_{s} r_{i}$. Thus, there exists a strong adaptation relation $\mathcal{R}_{i}$ containing $(q_{i},r_{i})$.  Consider the transition(s) $(q_i,r_i,\emptyset)
 (\goes{r_{i}})^{m_i}(\goes{r_{i},\psi_{i},r_{i+1}})^{n_i}
 (q_{i+1},r_{i+1},\emptyset)$. There are two cases:
\begin{itemize}
\item $m_{i} =1$. Thus $(q_i,r_i,\emptyset) \goes{r_{i}}  (q_{i+1},r_{i+1} = r_{i}, \emptyset)$ for several $B$ states playing the role of $q_{i+1}$. By the definition of strong adaptation, the relation $\mathcal{R}_{i}$ containing $(q_{i},r_{i})$ contains also $(q_{i+1},r_{i+1} = r_{i})$ for any state $q_{i+1}$, thus $q_{i+1} |_{s} r_{i+1}$.
\item $m_{i} =0$. Thus $(q_i,r_i,\emptyset) (\goes{r_{i}, \psi_{i}, r_{i+1}})^{n_{i}}  (q_{i+1},r_{i+1}, \emptyset)$ for some $n_{i} > 0$. Again by the definition of strong adaptation, this is a path leaving from a state $q_{i}$ that is strong adaptive to $r_{i}$. Thus, the reached state $(q_{i+1},r_{i+1}, \emptyset)$ must be such that $(q_{i+1},r_{i+1}) \in \mathcal{R}_{i}$, i.e., $q_{i+1} |_{s} r_{i+1}$.
\end{itemize} 
\end{proof}

\subsection{Proposition 5 (Construction of Strong Adaptation Relation)}\label{proof:prop5}
\begin{proof}
If $S[B]$ is strong adaptable then $f_{0} = (q_{0}, r_{0}, \emptyset)$ and $q_{0} |_{s} r_{0}$. By applying the propagation lemma of strong adaptation (Lemma~\ref{lem:propagation}), we get that all states $(q,r,\emptyset) \in Post^{*}(f_{0})$ are such that $q |_{s} r$. Thus, we can use $\mathcal{R}_{q,r}$ to denote the strong adaptation relation, containing $(q,r)$, that exists for each $(q,r,\emptyset) \in Post^{*}(f_{0})$. Moreover, we naturally deduce that, for each such pair, $q \models L(r)$. Thus, if we take $\hat{\mathcal{R}} = \bigcup_{(q,r,\emptyset) \in Post^{*}(f_{0})} \mathcal{R}_{q,r}$ we have, by definition of $\hat{\mathcal{R}}$ and of $\mathcal{R}$, $\mathcal{R} \subseteq \hat{\mathcal{R}}$. But $\mathcal{R}$ contains, by its definition, each possible pair $(q,r)$ such that $(q,r,\emptyset)$ is reachable from $f_{0}$ and by the rule \textsc{Steady} of the operational semantics, $q \models L(r)$. Thus, it must also hold  $\hat{\mathcal{R}} \subseteq \mathcal{R}$. Hence, $\mathcal{R} = \hat{\mathcal{R}}$. By Proposition~\ref{prop:unionstrong}, $\hat{\mathcal{R}} = \mathcal{R}$ is a strong adaptation relation.

For the converse, trivially, if $\mathcal{R} = \{ (q,r) \in Q \times R \mid (q,r,\emptyset) \in Post^{*}(f_{0}) \}$ is a strong adaptation relation then $S[B]$ is strong adaptable, because, by definition,  $(q_0,r_0) \in \mathcal{R}$.

\end{proof}

\subsection{Theorem 1 (Weak adaptability checking)}\label{proof:thm1}
\begin{proof}
$(\Rightarrow)$ Having $q \ |_w \ r $, by the propagation of weak adaptation (Lemma~\ref{lem:propagation_weak}) we can construct an infinite path $\pi$ in $\mathcal{F}(S[B])$, and thus in $\mathcal{K}(S[B])$, starting from $(q,r,\emptyset)$, which has the form specified in the Lemma and such that $q_{i} \ |_{w} \ r_{i}$ for all $i$. Such a path can be used to show that the given CTL formula is true. Since $q_{i} \ |_{w} \ r_{i}$, along $\pi$ the $progress$ proposition is true in all states. Moreover, whenever $m_{i} = 0$ and $n_{i} > 0$ in $(q_i,r_i,\emptyset) (\goes{r_{i}})^{m_i}(\goes{r_{i},\psi_{i},r_{i+1}})^{n_i}
 (q_{i+1},r_{i+1},\emptyset)$, the proposition $adapting$ is true in state $(q_i,r_i,\emptyset)$. In this case we reach, by following the path, the state $(q_{i+1},r_{i+1}, \emptyset)$ in which, since $progress$ is true, then also $steady$ is true. Note that $adapting$ is true also in all the intermediate states between $(q_{i},r_{i},\emptyset)$ and $(q_{i+1},r_{i+1},\emptyset)$. Following the same path, from all these states the same target state $(q_{i+1},r_{i+1},\emptyset)$, in which $steady$ is true, is reached. The cases in which $m_{i} = 1$ and $n_{i}=0$ correspond to states in which $adapting$ is false (by definition of $adapting$ and by property $(i)$ of Proposition~\ref{prop:semantics}), thus in this case the implication $(adapting \implies \mathbf{EF} \ steady)$ is vacuously true.

$(\Leftarrow)$ If the formula is true at state $(q,r,\emptyset)$, by definition of the semantics of CTL we have that there exists an infinite path $\pi$ in $\mathcal{K}(S[B])$ in which every state satisfies $progress$ and the sub-formula $(adapting \implies \mathbf{EF} \ steady)$. Such a path is a witness of the truth of the formula and can be calculated by a model checker usually in the form of a prefix followed by a cycle in which some reasonable fairness constraints hold\footnote{For a detailed discussion on the fairness constraints in CTL and on the generation of witnesses we refers to \cite{katoen2008,clarke1995efficient}.}. To show that $q \ |_{w} \ r$ we use such a path $\pi$ to generate a weak adaptation relation $\mathcal{R}_{\pi}$ as follows:
$$
\mathcal{R}_{\pi} = \{(q_{i},r_{i}) \mid \exists i \geq 0 \colon \pi[i] = (q_{i},r_{i},\emptyset)  \}
$$
First, we note that $(q= q_{0}, r= r_{0})$ is in $\mathcal{R}_{\pi}$ because $\pi[0] = (q=q_{0},r=r_{0},\emptyset)$. Then, we conclude the proof by showing, in the following, that $\mathcal{R}_{\pi}$ is indeed a weak adaptation relation. Let $i\geq 0$ and let $\pi[i] = (q_{i},r_{i},\emptyset)$. We check that for the generic pair $(q_{i},r_{i}) \in \mathcal{R}_{\pi}$ all the conditions of the weak adaptability definition (Def.~\ref{def:weakadapt}) hold:
\begin{itemize}
\item[(i)] $q_{i} \models L(r_{i})$ holds by property $(vii)$ of Proposition~\ref{prop:semantics}; $\textsc{Progress}(q_{i},r_{i})$ also holds because $\pi[i] = (q_{i},r_{i},\emptyset)$ is a state along an infinite path, thus it does not stop;
\item[(ii)] if in $\pi[i]$ we have that $(q_{i},r_{i},\emptyset) \goes{r}$ then we can take the transition $(q_{i},r_{i},\emptyset) \goes{r} (q_{i+1},r_{i+1},\emptyset)$ of $\pi$ and then $(q_{i+1} , r_{i+1}) \in \mathcal{R}_{\pi}$;
\item[(iii)] if in $\pi[i]$ the $(q_{i},r_{i},\emptyset) \goes{r,\psi,r'}$ for some $\psi$ and $r'$, then $\pi[i] \models_{\mathrm{CTL}} adapting$ and $\pi[i] \models_{\mathrm{CTL}} steady$. Thus, the sub-formula $(adapting \implies \mathbf{EF}$ $steady)$ is immediately true in $\pi[i]$. However, we know that $\pi$ continues and by property $(v)$ of Proposition~\ref{prop:semantics} we know that the semantics imposes that $\pi$ adapts as soon as possible. Thus, there are two further sub-cases:

\begin{itemize}
\item  the adaptation is immediate: $(q_{i},r_{i},\emptyset) \goes{r,\psi,r'} (q_{i+1} ,  r'= r_{i+1}, \emptyset)$ and thus $(q_{i+1} , r_{i+1}) \in \mathcal{R}_{\pi}$;
\item the adaptation cannot be immediate, thus in $\pi$ we have\\
$(q_{i},r_{i},\emptyset) \goes{r,\psi,r'} (q' , r_{i} , \{(\psi, r')\})$ for some $q' \in Q, \ \psi \in \Psi(\Sigma,A)$ and $r' \in R$. Again by the progress of $\pi$ and by the definition of $\mathcal{K}(S[B])$, it holds that $(q' , r_{i} , \{(\psi, r')\}) \models_{\mathrm{CTL}} adapting$ and that $(q' , r_{i} , \{(\psi, r')\}) \not \models_{\mathrm{CTL}} steady$. Now, the sub-formula $(adapting \implies \mathbf{EF} \ steady)$ is not immediately true and, by hypothesis, it holds in $(q' , r_{i} , \{(\psi, r')\})$. Thus, there exists in $\mathcal{K}(S[B])$ a path starting from $(q' , r_{i} , \{(\psi, r')\})$ and leading in $j$ steps, $j > 0$, to a state in which $steady$ holds, that is of the form $(q'', r', \emptyset)$. Among possibly others, the continuation of $\pi$ until the next state of the form $(q'' = q_{i+1} ,  r'= r_{i+1}, \emptyset)$ is a finite path satisfying this condition. Indeed, if it were not, then $\pi$ would either stop or infinitely continue along states that satisfy $adapting$ but not $steady$. This contradicts the fact that $\pi$ is a witness of the truth of the original CTL formula. Thus, we have that  $(q_{i},r_{i},\emptyset) (\goes{r_{i},\psi_{i},r_{i+1}})^{j+1}
 (q_{i+1},r_{i+1},\emptyset)$ (where $r_{i+1} = r'$ and $\psi_{i} = \psi$) and $(q_{i+1} , r_{i+1}) \in \mathcal{R}_{\pi}$.
\end{itemize}
\end{itemize}

\end{proof}

\subsection{Theorem 2 (Strong adaptability checking)}\label{proof:thm2}
\begin{proof}
$(\Rightarrow)$ Having $q \ |_s \ r $ we consider all paths $\pi$ in $\mathcal{K}(S[B])$ starting from $(q,r,\emptyset)$, which have the form and the properties stated in the proof of Lemma~\ref{lem:propagation}. All such paths can be used to show that the given CTL formula is true by using the same argument used in the  proof part $(\Rightarrow)$ of Theorem~\ref{teo:weak}, by turning the initial existential quantification into a universal one.

$(\Leftarrow)$ Also in this case the proof is similar to the proof part $(\Leftarrow)$ of Theorem~\ref{teo:weak}. However, the reasoning should be repeated not considering the witness of the truth of the formula, but a generic path $\pi$ starting from $(q,r,\emptyset)$. Moreover, the strong adaptation relation $\mathcal{R}$ must be defined generalising on all paths:
$$
\mathcal{R} = \{(q_{i},r_{i}) \mid \exists \pi \in \mathit{Paths}((q,r,\emptyset)) \colon \exists i \geq 0 \colon \pi[i] = (q_{i},r_{i},\emptyset)  \}
$$
Then, the checking of the conditions of the strong adaptability definition (Def.~\ref{def:strongadapt}) on every pair of $\mathcal{R}$ requires similar arguments to the proof of Theorem~\ref{teo:weak}.
\end{proof}
\section{Bone Remodelling Case Study}\label{appendix:bone}
\subsection{$S[B]$ Model}
Here a biological example is introduced: the bone remodelling (BR) process, which is intrinsically self-adaptive. The model presented in this paper is a simplified version of previous works by some of the co-authors in the formal computational modelling of BR. Here, aspects like molecular signalling, spatial location of cells in the bone tissue and quantitative dynamics have been omitted. For further details, we refer the interested reader to the papers~\cite{bartocci2012,paoletti2012multilevel}.

Bone remodelling is a process by which aged bone is continuously renewed in a balanced alternation of bone resorption, performed by cells called \textit{osteoclasts (Oc)}, and formation, performed by \textit{osteoblasts (Ob)}. It is responsible for repairing micro-damages, for maintaining mineral homeostasis and for the structural adaptation of bone in response to mechanical stress. Another kind of cells called \textit{osteocytes (Oy)} are responsible for the initiation of the remodelling process, by sending biomechanical signals that activate the resorption phase. Osteocytes act as mechanosensors: mechanical-induced signals from the tissue level are transmitted at cellular level, so that the intensity of the remodelling activity is regulated by the intensity of osteocytes' signalling (mechanotransduction). In this way, regular osteocytes' signalling leads to a regular remodelling activity, while events like a stronger mechanical stress or a micro-fracture induce a higher osteocytes' signalling and in turn to a more prominent remodelling activity.

This biological process is implemented as an $S[B]$ system where the $B$ level models the behaviour at the cellular level, while the $S$ level models the different phases of remodelling (initiation, resorption and formation). In particular, we consider two different $S$ levels: $S_0$, which describes the adaptation phases during a regular remodelling activity; and $S_1$ which extends $S_0$ in order to support also unexpected events (e.g.\ a micro-fracture), expressed by means of an over-signalling by osteocytes.

\subsubsection{Behavioural Level}
We define the following set of observables and associated sorts over the behavioural level (depicted in Fig.~\ref{fig:b-level_br}):
$$(Oc: \lbrace 0, 1, 2\rbrace, Ob: \lbrace 0, 1, 2, 3, 4\rbrace, Oy: \lbrace 0, 1, 2\rbrace)$$
Sorts and other symbols used hereafter are interpreted over the integers. $Oy$ is the variable modelling the number of active osteocytes, thus giving a measure of the strength of their signalling. $Oc$ models the availability of active osteoclasts, while $Ob$ is the variable accounting for the number of active osteoblasts. Note that these are underestimated values of bone cell abundances in a remodelling unit: realistic (approximate) ranges are: $[0,10]$ for osteoclasts, $[0,50]$ for osteoblasts and of $[0,9500]$ for osteocytes.

In order to avoid the exhaustive listing of all the $B$ transitions, we describe them by a set of guarded rules listed in Table~\ref{tbl:BRrules}, of the form
$$\text{\textsc{RuleName:}} \quad guard \goes{} update,$$
where \textsc{RuleName} is the name of the rule, $guard$ is a pre-condition indicating when the rule can be applied, thus determining the source states, and $update$ possibly assigns different values to observable variables, thus determining the target states.

Rule \textsc{Init} describes the initiation of bone remodelling from the state of quiescence and corresponds to the transition $(0,0,0)\goes{}(0,0,1)$. Rule \textsc{Oy$_+$} tells that if there is no signalling activity by osteocytes, then an over-signalling ($Oy=2$) can happen, indicating that for instance a micro-fracture has occurred. Rule \textsc{Oy$_-$} states that $Oy$ can decrease if $Oy>0$ and if $Oy\leq Oc$, meaning that osteocytes' signalling can decrease only when the necessary number of osteoclasts has been recruited. Rule \textsc{Oc$_+$} tells that osteoclasts can proliferate under the following conditions: $Oc < Oy$, meaning that the number of recruited osteoclasts must agree with the intensity of the activity of osteocytes; $Ob\leq 1$, modelling the negative regulation of osteoblasts on osteoclasts; and of course $Oc < 2$, to avoid out-of-range updates. Rule \textsc{Oc$_-$} regulates the death of osteoclasts, which can occur only if $Oc>0$ and if $Oc > Oy$, thus ensuring that osteoclasts cannot decrease before they have been completely recruited. \textsc{Ob$_+$} regulates osteoblasts' proliferation, that must agree with the obvious condition that $Ob < 4$; that $Oy=0$ (no active osteocytes); and that $Ob < 2Oc$ (the number of recruited osteoblasts is proportional to the number of osteoclasts). Finally rule \textsc{Ob$_-$} tells that $Ob$ can decrease only if $Ob > 0$ and if $Ob > Oc$, which makes sure that the formation phase ends only after the resorption phase.

The state machine of the resulting $B$ level is depicted in Fig.~\ref{fig:b-level_br}.
\begin{table}
\centering
\begin{tabular}{rrl}
\hline
\textsc{Init:} &$(0,0,0)\goes{}$&$Oy=Oy+1$\\
\textsc{Oy$_+$:} & $Oy==0 \goes{}$&$Oy=2$\\
\textsc{Oy$_-$:} &$Oy\leq Oc \wedge Oy > 0 \goes{}$&$Oy=Oy-1$\\
\textsc{Oc$_+$:} &$Ob\leq 1 \wedge Oc < Oy \wedge Oc < 2 \goes{}$&$Oc=Oc+1$\\
\textsc{Oc$_-$:} &$Oc > Oy \wedge Oc > 0 \goes{}$&$Oc=Oc-1$\\
\textsc{Ob$_+$:} &$Ob < 2Oc \wedge Oy=0 \wedge Ob < 4 \goes{}$&$Ob=Ob+1$\\
\textsc{Ob$_-$:} &$Ob > Oc \wedge Ob > 0 \goes{}$&$Ob=Ob-1$\\\hline
\end{tabular}
\caption{Guarded rules of the form $\text{\textsc{RuleName:}} \ guard \goes{} update$, determining the transition relation in the $B$ level.}
\label{tbl:BRrules}
\end{table}

\begin{sidewaysfigure}
\centering
\includegraphics[width=\textwidth]{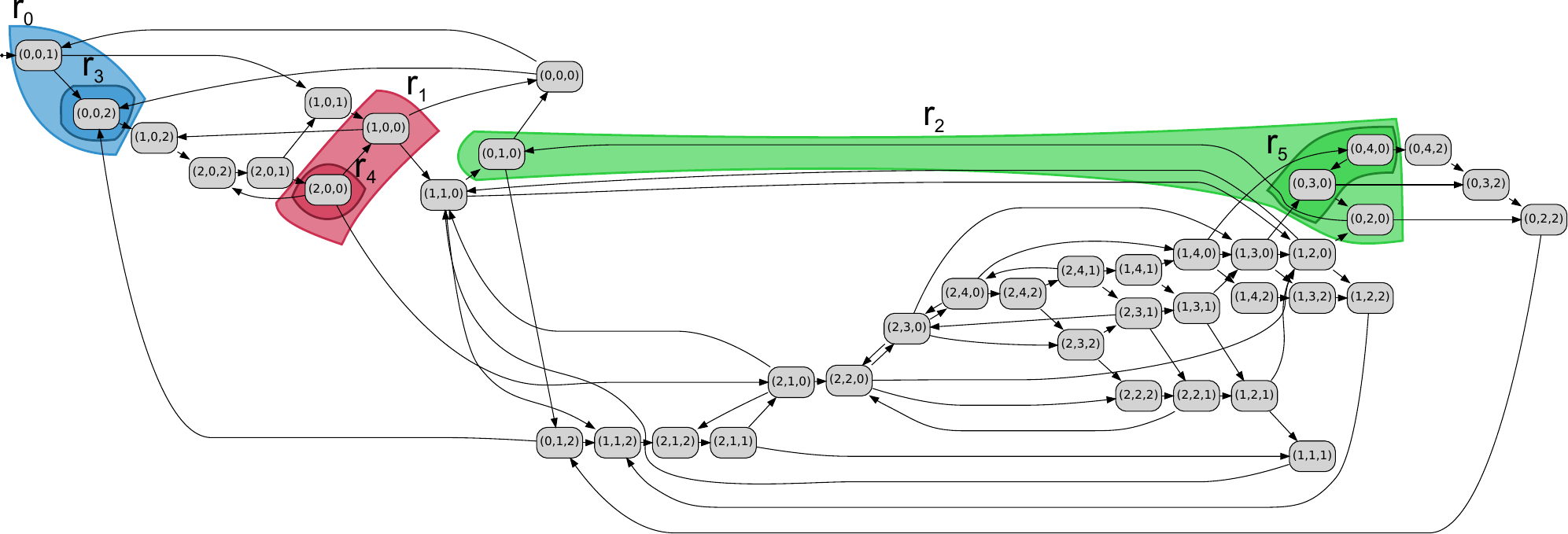}
\caption{The behavioural state machine for the bone remodelling example. Each state is characterized by different evaluation of the variables $(Oc, Ob, Oy)$ (osteoclasts, osteoblasts, osteocytes). Transitions are determined by the rules specified in Table~\ref{tbl:BRrules}. For the sake of clarity, states unreachable from the initial state $(0,0,1)$ have been omitted. Coloured areas are used to represent the states of the $S$ level, which identify stable regions in the $B$ level. The $S$ states considered are: $r_0$ (initiation of remodelling, light blue), $r_1$ (resorption, light red), $r_2$ (formation, light green), $r_3$ (osteocytes' over-signalling, blue), $r_4$ (high resorption, red) and $r_5$ (high formation, green).}
\label{fig:b-level_br}
\end{sidewaysfigure}

\subsubsection{Structural Level}
In this example, the structural level models the different key phases during bone remodelling. Table~\ref{tbl:s-states} lists the considered $S$ states $r$, together with their associated constraints $L(r)$. Additionally, Fig.~\ref{fig:b-level_br} shows how the structural constraints identify different stable regions in the behavioural level.
\begin{itemize}
\item[$r_0:$] it indicates the \textit{initiation} of the remodelling process and requires that osteocytes must be active ($Oy>0$), but that osteoclasts and osteoblasts are not ($Oc==0 \wedge Ob==0$).
\item[$r_1:$] it models the \textit{resorption} phase, occurring when only osteoclasts are active ($Oc>0 \wedge Ob==0 \wedge Oy==0$).
\item[$r_2:$] it describes the \textit{formation} phase, occurring when only osteoblasts are active ($Ob>0 \wedge Oc==0 \wedge Oy==0$).
\item[$r_3:$] it models the occurrence of a ``fault'' in the remodelling system, like an unordinary mechanical stress or a micro-fracture, after which osteocytes' signalling is more prominent ($Oy==2$).
\item[$r_4:$] it describes a high resorption activity  ($Oc>1$).
\item[$r_5:$] it describes a high formation activity  ($Ob>2$).
\end{itemize}

\begin{table}
\centering
\begin{tabular}{rcc}
&$r$&$L(r)$\\\hline
Initiation&$r_0$&$Oy>0 \wedge Oc==0 \wedge Ob==0$\\
Resorption&$r_1$&$Oc>0 \wedge Ob==0 \wedge Oy==0$\\
Formation&$r_2$&$Ob>0 \wedge Oc==0 \wedge Oy==0$\\
Osteocytes' over-expression&$r_3$&$Oy==2 \wedge Oc==0 \wedge Ob==0$\\
High resorption&$r_4$&$Oc>1 \wedge Ob==0 \wedge Oy==0$\\
High formation&$r_5$&$Ob>2 \wedge Oc==0 \wedge Oy==0$\\ \hline
\end{tabular}
\caption{List of $S$ states $r$ and associated labelling function $L(r)$ (constraints) in the bone remodelling example. Each $S$ state models a key phase during the remodelling cycle.}
\label{tbl:s-states}
\end{table}

The two $S$ levels $S_0$ (regular adaptation) and $S_1$ (fault-tolerant adaptation) are illustrated in Fig.~\ref{fig:s-levels_br}. In the following, we denote the set of states and transitions of a structural level $S_i$ with $R(S_i)$ and $\goes{}_S(S_i)$, respectively.
The structural state machine $S_0$ is given by:
\begin{small}
$$S_0 = (\{r_0, r_1, r_2\}, r_0, \mathcal{O}^{\Sigma,A}_{M}, \{ r_0 \goes{Oc>0}_S r_1, r_1 \goes{Ob>0 \wedge Oy==0}_S r_2, r_2 \goes{Ob==0}_S r_0 \}, L ),$$
\end{small}
where $\mathcal{O}^{\Sigma,A}_{M}$ is the above defined observation function; and $L$ is the labelling function as described in Table~\ref{tbl:s-states}. Below we discuss in more detail the transitions of $S_0$.
\begin{itemize}
\item $r_0 \goes{Oc>0}_S r_1.$ During the adaptation between the initiation and the resorption phase, $Oc>0$ must hold, meaning that osteoclasts have to be recruited.
\item $r_1 \goes{Ob>0 \wedge Oy==0}_S r_2.$ The transition invariant tells that during the adaptation between resorption and formation, osteoblasts have to be recruited and osteocytes are no more active.
\item $r_2 \goes{Ob==0}_S r_0.$ This transition requires that the formation phase has to be completed ($Ob==0$), before starting another remodelling cycle.
\end{itemize}
The structural state machine $S_1$ is given by:
$$S_1 = (R(S_0) \ \cup \ \{r_3, r_4, r_5\}, r_0, \goes{}_S(S_1), L), $$
where
\[
\begin{array}{rl}
\goes{}_S(S_1) \ = \ \goes{}_S(S_0) \ \cup &\{ r_2 \goes{Oy>0}_S r_3, r_3 \goes{Oc>0 \wedge Ob==0}_S r_4, \\ 
& r_4 \goes{Ob>0 \wedge Oy==0}_S r_5, r_5 \goes{Oy>0}_S r_3, r_5 \goes{Ob<3}_S r_2\}.
\end{array}
\]
The transitions added in $S_1$ allow us to model the self-adaptation of the bone remodelling system after an unexpected malfunctioning, in response to which a higher remodelling activity (a sort of fallback remodelling cycle) is initiated:
\begin{itemize}
\item $r_2 \goes{Oy>0}_S r_3.$ If after the formation activity, osteocytes have started sending signals ($Oy>0$), then the system adapts to an $S$ state characterized by the over-expression of osteocytes.
\item $r_3 \goes{Oc>0 \wedge Ob==0}_S r_4.$ During the adaptation between osteocytes' over-expression and the high formation activity, osteoclasts have to be recruited ($Oc>0$) and osteoblast must not be active ($Ob==0$).
\item $r_4 \goes{Ob>0 \wedge Oy==0}_S r_5.$  Similarly to the transition between regular resorption and regular formation, during the adaptation between high resorption and high formation, osteoblasts have to be recruited and osteocytes have not to be active.
\item $r_5 \goes{Oy>0}_S r_3.$ A fallback remodelling cycle can take place also after the high formation phase, if $Oy>0$.
\item $r_5 \goes{Ob<3}_S r_2.$ From the high formation phase, the system may adapt to a regular formation activity, under the invariant $Ob<3$.
\end{itemize}

\begin{figure}
\centering
\subfloat[]{\includegraphics[width=0.5\textwidth]{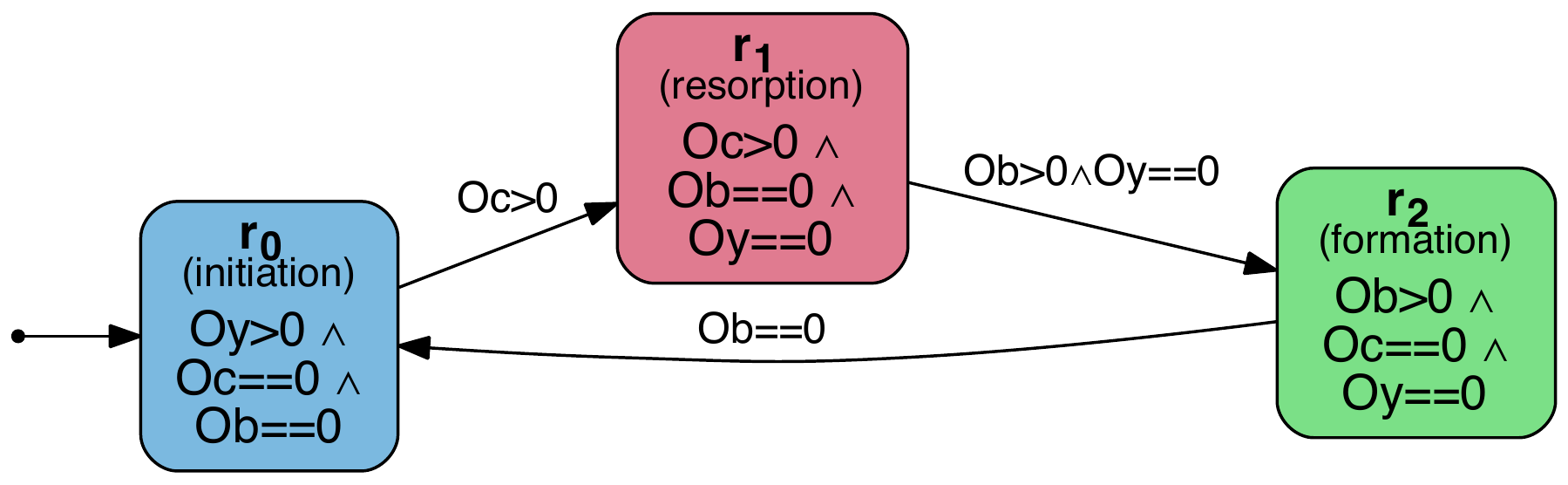}}\\
\subfloat[]{\includegraphics[width=\textwidth]{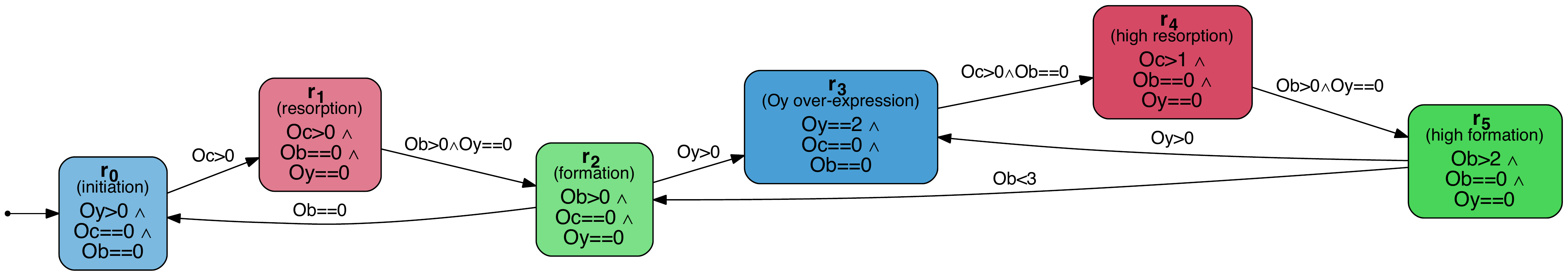}}
\caption{The two different structural levels for the bone remodelling example. $S_0$ (Fig.~\ref{fig:s-levels_br} (a)) models the key phases during a regular remodelling activity ($r_0$ initiation, $r_1$ resorption and $r_2$ formation). $S_1$ (Fig.~\ref{fig:s-levels_br} (b)) extends $S_0$ with a ``fallback'' loop, activated by an over-expression of osteocytes ($r_3$) which in turn triggers a higher resorption activity ($r_4$) and a higher formation ($r_5$).}
\label{fig:s-levels_br}
\end{figure}

\subsection{Flat Semantics of the Bone Remodelling Example}
The flat semantics of $S_0[B]$ and $S_1[B]$ in the bone remodelling example is given in Figure~\ref{fig:sbs_br}.
\begin{figure}
\centering
\subfloat[]{\includegraphics[width=0.15\textwidth]{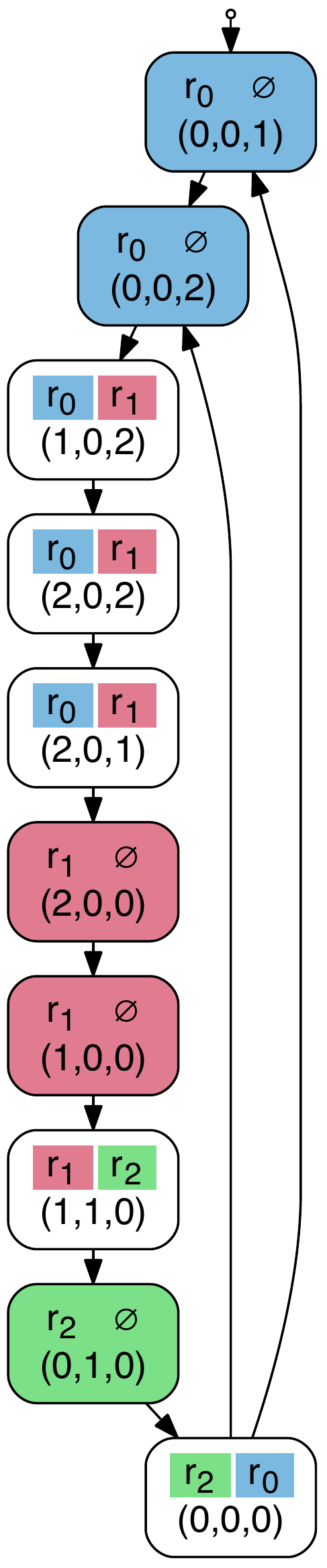}}\hfill
\subfloat[]{\includegraphics[width=0.5\textwidth]{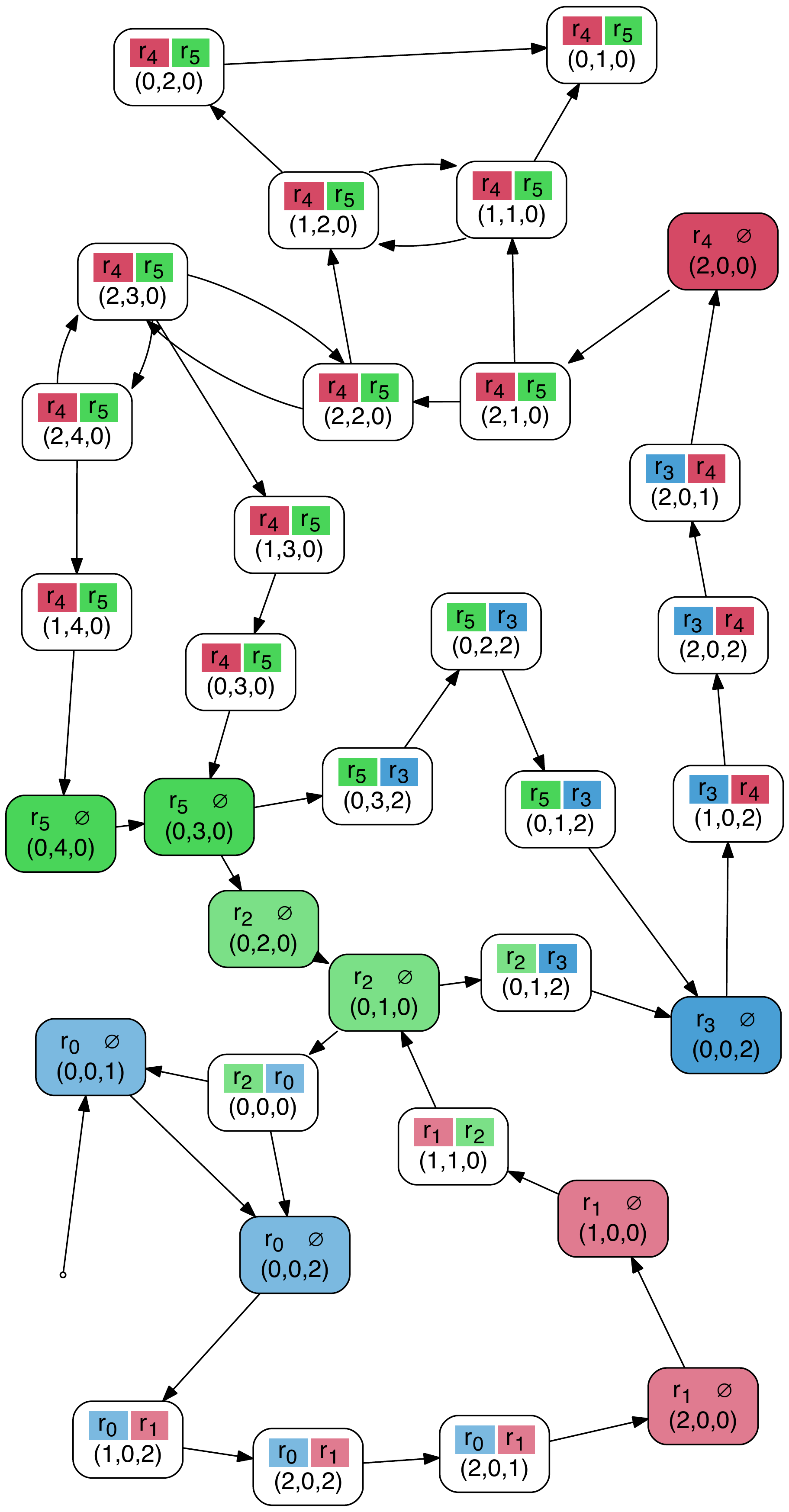}}
\caption{The flat semantics of the two systems $S_0[B]$ (Fig.~\ref{fig:sbs_br} (a)) and $S_1[B]$ (Fig.~\ref{fig:sbs_br} (b)) in the bone remodelling example.
}
\label{fig:sbs_br}
\end{figure}
Firstly, we observe that $\mathcal{F}(S_1[B])$ has a larger state space than the flat semantics of $S_0[B]$, due to the higher number of states and transitions in $S_1$. Since the two systems share the same behavioural level and $S_0$ is a subset of $S_1$, it is possible to notice that also $\mathcal{F}(S_0[B]) \subseteq \mathcal{F}(S_1[B])$.

Similarly to the motion controller model, in $\mathcal{F}(S_0[B])$ every adaptation path leads to a target $S$ state and in $\mathcal{F}(S_1[B])$ there always exists a successful adaptation path, but there are infinite adaptation paths as well and there is a deadlock at the adapting state $((0,1,0), r_4, \{ \cdot, r_5\})$ because all its successors violate the adaptation invariant.

Thus, we can anticipate that the behavioural level $B$ is able to successfully adapt under normal remodelling requirements (structure $S_0$), for \textit{all possible adaptation paths}, or $S_0[B]$ is \textit{strong adaptable}. On the contrary, $B$ is able to successfully adapt under ``fault-tolerant'' remodelling requirements (structure $S_1$), only for \textit{some adaptation paths}, or $S_1[B]$ is \textit{weak adaptable}.

\subsection{Adaptation Relations in the Bone Remodelling Example}
We show that in the bone remodelling case study $S_0[B]$ is strong adaptable and $S_1[B]$ is weak adaptable, but not strong adaptable.

In order to verify that $S_0[B]$ is strong adaptable, we find a strong adaptation relation $\mathcal{R}$ s.t.\ $(q_0, r_0) \in \mathcal{R}$. Similarly to the motion control example, in $\mathcal{F}(S_0[B])$ every state is reachable from the initial state $((0,0,1), r_0, \emptyset)$. Thus, we can consider the relation $\mathcal{R} =\{(q,r) \ | \ (q,r,\emptyset) \in F\}$:
\[
\mathcal{R} = \{ ((0,0,1), r_0), ((0,0,2), r_0), ((2,0,0), r_1), ((1,0,0), r_1), ((0,1,0), r_2) \}.
\]
Note that condition $(i)$ of the definition of strong adaptation is true for every couple in the relation, since $\forall (q,r) \in \mathcal{R}. \ q \models L(r)$; and $\textsc{Progress}(q,r)$ holds for any of such states, because there are no deadlock steady states in the flat semantics. Clearly, $(q_0, r_0) \in \mathcal{R}$. We show that $\mathcal{R}$ is a strong adaptation relation, by checking requirements $(ii)$ and $(iii)$ of the definition of strong adaptation (Sect.~5 of the manuscript) for each element of $\mathcal{R}$.
\begin{small}
\begin{itemize}
\item $((0,0,1), r_0)$.
\begin{itemize}
\item[$(ii)$] $((0,0,1), r_0, \emptyset)\goes{r_0}((0,0,2), r_0, \emptyset)$ and $((0,0,2), r_0) \in \mathcal{R}$
\item[$(iii)$] $((0,0,1), r_0, \emptyset)\goes{r,\psi,r'}\! \! \! \! \! \! \not$
\end{itemize}
\item $((0,0,2), r_0)$.
\begin{itemize}
\item[$(ii)$] $((0,0,2), r_0, \emptyset)\not\goes{r}$
\item[$(iii)$] there is only one adaptation path from $((0,0,2), r_0, \emptyset)$ leading to \\$((2,0,0), r_1, \emptyset)$, and $((2,0,0), r_1), \in \mathcal{R}$.
\end{itemize}
\item $((2,0,0), r_1)$.
\begin{itemize}
\item[$(ii)$] $((2,0,0), r_1, \emptyset)\goes{r_1}((1,0,0), r_1, \emptyset)$ and $((1,0,0), r_1) \in \mathcal{R}$
\item[$(iii)$] $((2,0,0), r_1, \emptyset)\goes{r,\psi,r'}\! \! \! \! \! \! \not$
\end{itemize}
\item $((1,0,0), r_1)$.
\begin{itemize}
\item[$(ii)$] $((1,0,0), r_1, \emptyset)\not\goes{r}$
\item[$(iii)$] there is only one possible adaptation path from $((1,0,0), r_1, \emptyset)$ leading to $((0,1,0), r_2, \emptyset)$, and $((0,1,0), r_2)  \in \mathcal{R}$.
\end{itemize}
\item $((0,1,0), r_2))$.
\begin{itemize}
\item[$(ii)$] $((0,1,0), r_2), \emptyset)\not\goes{r}$
\item[$(iii)$] there are two possible adaptation paths from $((0,1,0), r_2), \emptyset)$ leading to $((0,0,1), r_0, \emptyset)$ and to $((0,0,2), r_0, \emptyset)$, respectively, and\\ $((0,0,1), r_0), ((0,0,2), r_0)  \in \mathcal{R}$.
\end{itemize}
\end{itemize}
\end{small}

We show that $S_1[B]$ is weak adaptable, by the following weak adaptation relation:
\[
\begin{array}{rl}
\mathcal{R} = &\{ ((0,0,1), r_0), ((0,0,2), r_0), ((2,0,0), r_1), ((1,0,0), r_1), ((0,1,0), r_2),\\
& ((0,0,2), r_3), ((2,0,0), r_4),((0,4,0), r_5), ((0,3,0), r_5),((0,2,0), r_2)\}.
\end{array}
\]
Similarly to $S_0[B]$, $(q_0,r_0) \in \mathcal{R}$ and for all $(q,r) \in \mathcal{R}$, $\ q \models L(r)$ and $\textsc{Progress}(q,r)$ both holds. Therefore, we show that $\mathcal{R}$ is a weak adaptation relation, by checking requirements $(ii)$ and $(iii)$ of the weak adaptation definition (Def.~\ref{def:weakadapt}) for each element of $\mathcal{R}$. Since $\mathcal{F}(S_0[B])$ is included in $\mathcal{F}(S_1[B])$, we can omit the test for those states that have been inspected for $S_0[B]$.
\begin{small}
\begin{itemize}
\item $((0,0,2), r_3)$.
\begin{itemize}
\item[$(ii)$] $((0,0,2), r_3, \emptyset)\not\goes{r}$
\item[$(iii)$] $((0,0,2), r_3, \emptyset)\goes{r_3, \psi, r_4}^+((2,0,0), r_4, \emptyset)$ and $((2,0,0), r_4) \in \mathcal{R}$
\end{itemize}
\item $((2,0,0), r_4)$.
\begin{itemize}
\item[$(ii)$] $((2,0,0), r_4, \emptyset)\not\goes{r}$
\item[$(iii)$] $((2,0,0), r_4, \emptyset)\goes{r_4, \psi, r_5}^+((0,4,0), r_5, \emptyset)$ and $((0,4,0), r_5) \in \mathcal{R}$
\end{itemize}
\item $((0,4,0), r_5)$.
\begin{itemize}
\item[$(ii)$] $((0,4,0), r_5, \emptyset)\goes{r_5}((0,3,0), r_5 \emptyset)$ and $((0,3,0), r_5) \in \mathcal{R}$
\item[$(iii)$] $((0,4,0), r_5, \emptyset)\goes{r,\psi,r'}\! \! \! \! \! \! \not$
\end{itemize}
\item $((0,3,0), r_5)$.
\begin{itemize}
\item[$(ii)$] $((0,3,0), r_5, \emptyset)\not\goes{r}$
\item[$(iii)$] $((0,3,0), r_5, \emptyset)\goes{r_5, \psi, r_2}((0,2,0), r_2, \emptyset)$ and $((0,2,0), r_2) \in \mathcal{R}$
\end{itemize}
\item $((0,2,0), r_2)$.
\begin{itemize}
\item[$(ii)$] $((0,2,0), r_2, \emptyset)\goes{r_2}((0,1,0), r_2, \emptyset)$ and $((0,1,0), r_2) \in \mathcal{R}$
\item[$(iii)$] $((0,2,0), r_2, \emptyset)\goes{r,\psi,r'}\! \! \! \! \! \! \not$
\end{itemize}
\end{itemize}
\end{small}
Note that $\mathcal{R}$ is not a strong adaptation because for instance the element $((2,0,0), r_4)$ cannot be in a strong relation, since the adaptation path 
\[
\begin{array}{l}
((2,0,0), r_4, \emptyset) \goes{r_4, \psi, r_5} ((2,1,0), r_4, \{(\phi, r_5)\}) \goes{r_4, \psi, r_5}\\
\goes{r_4, \psi, r_5}((1,1,0), r_4, \{(\phi, r_5)\})\goes{r_4, \psi, r_5}  ((0,1,0), r_4, \{(\phi, r_5)\})
\end{array}
\]
cannot lead to a steady state ($((0,1,0), r_4, \{(\phi, r_5)\})$ is a deadlock adapting state).

\end{document}